\DeclareAcronym{QUBO}{
	short = QUBO,
	long = Quadratic Unconstrained Binary Optimization,
}
\DeclareAcronym{QAP}{
	short = QAP,
	long = Quadratic Assignment Problem,
}
\DeclareAcronym{MIS}{
	short = MIS,
	long = Maximum Independent Set,
}
\DeclareAcronym{MWIS}{
	short = MWIS,
	long = Maximum Weighted Independent Set,
}
\DeclareAcronym{UDG}{
	short = UDG,
	long = Unit Disk Graph,
}
\providecommand{\keywords}[1]
{
  \small	
  \textbf{\textit{Keywords---}} #1
}
\definecolor{softblue}{RGB}{0,162,255}
\numberwithin{equation}{section}
\theoremstyle{definition}
\newtheorem{theorem}{Theorem}[section]
\newtheorem{definition}[theorem]{Definition}
\newtheorem{example}[theorem]{Example}
\newtheorem{lemma}[theorem]{Lemma}
\newcommand{\Z}{\mathbb{Z}} 
\newcommand{\R}{\mathbb{R}}
\providecommand{\keywords}[1]{\textbf{\textit{Index terms---}} #1}
\newlength{\mynodespace}
\definecolor{myg}{RGB}{220,220,220}
\pgfplotsset{compat=1.18}
\newcommand*\samethanks[1][\value{footnote}]{\footnotemark[#1]}
\title{Quantum Approaches to the Quadratic Assignment Problem}
\author[1]{Nathan Daly\thanks{Funding for this work was provided by the Department of Navy award N00174-22-1-0030 issued by the Office of Naval Research.}}
\author[2]{Thomas Krauss} 
\author[3]{Julia Shapiro\samethanks{}\thanks{J. Shapiro is supported by the Department of Defense Cyber Service Academy Scholarship.}}
\affil[1]{Department of Mathematics, Virginia Tech, Blacksburg, Virginia, U.S.A.}
\affil[2]{I/ONX High Performance Compute, Las Vegas, Nevada, U.S.A.}
\date{}
\begin{document}

\maketitle

\abstract{The Quadratic Assignment Problem (QAP) is an NP-hard fundamental combinatorial optimization problem introduced by Koopmans and Beckmann in 1957. The problem is to assign $n$ facilities to $n$ different locations with the goal of minimizing the cost of the total distances between facilities weighted by the corresponding flows. We initiate the study of using Rydberg arrays to find optimal solutions to the QAP and provide a complementing circuit theory to facilitate an easy representation of other hard problems. We provide an algorithm for finding valid and optimal solutions to the QAP using Rydberg arrays.}

\keywords{quadratic assignment problem, neutral atom, algorithms, circuits}

\section{Introduction}
The QAP was introduced by Koopsman and Beckmann in \cite{qap} and is a fundamental problem in combinatorial optimization and operations research. It represents a challenging class of problems related to assigning a set of $n$ facilities to a set of $n$ locations with the goal of minimizing the cost associated with the assignments. The cost is a product of the distances between locations and the pair-wise flow between facilities, making the \ac{QAP} particularly useful in scenarios where both spatial positioning and inter-facility interactions are critical. Originally formulated to address facility layout issues,~\ac{QAP} has since found applications in various fields: for instance, in manufacturing, it optimizes layout to reduce transportation costs or improve process efficiency; in data centers, it helps assign servers to physical locations to minimize latency and energy use. Beyond these, \ac{QAP} has applications in DNA sequencing \cite{array}, airport terminal planning \cite{terminal}, and backboard layout \cite{backboard}. Solving \ac{QAP} is computationally intensive and research in the field focuses on both exact and heuristic approaches. Many research directions for finding optimal solutions to the~\ac{QAP} have focused on classical methods \cite{convexB, surveyMain, QAP1, FRIEZE198389,greedy, algQ, surveryQ, SolvingQA, TATE195AGA, recentQ, 4658351, simulatedA, mip} but these methods are constrained to small problem sizes. Recent work using Rydberg arrays and QuEra's Aquilas hardware, a 256 neutral atom quantum computer, to solve the Maximum Independent Set (MIS) problem yielded promising results \cite{quantumOpt}. In \cite{nguyen2023quantum}, the authors introduce a general framework for
mapping combinatorial optimization problems to unit disk graphs (UDG) and Rydberg arrays such as the MIS problem and the Maximum Weighted Independent Set (MWIS) problem and extend this to general classes of constrained binary optimization problems.

In this paper, we initiate the study of the \ac{QAP} using Rydberg arrays. In particular, we provide a general framework for circuit construction for classes of constrained binary optimization problems and we apply the gadget construction provided in \cite{nguyen2023quantum} to enforce constraints for the \ac{QAP} problem. The framework provided facilitates the design of the reduction from the \ac{QAP} to UDG-MWIS and of quantum algorithms for other problems. We show, through the general framework provided and the reduction of QAP to~UDG-MWIS, that Rydberg arrays can not only be used to solve other hard problems but also be optimized for improved performance.

This paper is organized as follows. In Section~\ref{background}, we will discuss the problem set up and the Rydberg array approach to solving the maximum independent set and maximum weighted independent set problems provided in \cite{nguyen2023quantum}.  In Section~\ref{circuit}, we present our framework for circuit construction of a broader class of constrained binary optimization problems. In Section~\ref{reduction}, we extend the techniques applied to the MIS and MWIS problems in \cite{nguyen2023quantum} and provide a general process for mapping \ac{QAP} to an instance of MWIS and show that this reduction encodes valid and optimal solutions of the \ac{QAP}. We then provide an optimized algorithm that reduces the number of variables needed to represent instance of the \ac{QAP}. We show that the MWIS of the graph constructed in the reduction corresponds to optimal solutions of the original \ac{QAP} instance.

\section{Preliminaries and Prior Work}\label{background} 

In this section, we formally introduce the \ac{QAP} and the approach to solving the \ac{MIS} and \ac{MWIS} on Rydberg neutral atom hardware using the formulation provided in \cite{nguyen2023quantum}.

\subsection{The Quadratic Assignment Problem}

The Quadratic Assignment Problem models the following situation: given $n$ facilities and~$n$ locations, assign all facilities to different locations with the goal of minimizing the cost, that is, the sum of the distances between locations multiplied by the corresponding flows between facilities. 
\noindent The model is based on the following constraints:
\begin{enumerate}
    \item Each facility can only be in one location,
    \item Each location can take only one facility, and,
    \item The cost of this process is measured by the pair-wise distance between locations times flow between facilities at the locations
\end{enumerate}

Formally, an instance of the \ac{QAP} is the tuple $I = (F,D)$, where $F$ is the flow matrix each of whose entries $f_{xy}$ is the flow from facility $x$ to facility $y$ and $D$ is the distance matrix each of whose entries $d_{ij}$ is the distances between location $i$ and location $j$. A valid assignment of facilities to locations is a permutation $\pi: [n] \to [n],$ where $[n]:= \{1, \ldots, n\}$. We can represent $\pi$ as a permutation matrix $\Pi$ each of whose entries $\pi_{xi}$ is 1 if $\pi(x) = i$~(facility $x$ is placed in location $i$) and 0 otherwise. The cost function for the \ac{QAP} can be represented~as 

\[C(\Pi) = \sum_{x,y,i,j=1}^n f_{xy} d_{ij} \pi_{xi} \pi_{yj}.\]

The goal is to find the assignment of facilities to locations satisfying the above constraints that minimizes $C(\Pi).$ For an $n\times n$ \ac{QAP} problem, using brute force one would have to check~$n!$ assignments $\Pi,$ as $|S_n| = n!$. As $n$ grows larger, it becomes exponentially difficult to find the assignment $\pi$ corresponding to the minimum cost $C(\Pi)$ for the $n \times n$ \ac{QAP}.

\subsection{MIS Using Rydberg Arrays}

In \cite{quantumOpt}, the quantum optimization of the MIS problem utilizing Rydberg arrays and QuEra's Aquila hardware showcased promising results for solving the MIS problem with Rydberg arrays. The authors observe a superlinear quantum speedup in finding exact solutions of the MIS problem. In this section, we discuss the formulation using encoding gadgets for the \ac{MIS} and \ac{MWIS} Problems with Rydberg arrays as in \cite{nguyen2023quantum}. We start with the following definitions.

\begin{definition} A \textbf{graph} $G = (V,E)$ is a pair consisting of a vertex set $V(G)$ and an edge set $E(G)$, where an edge is a set of two vertices. 
    \end{definition}

\begin{definition} An \textbf{independent set} of a graph $G$ is a subset of the vertices of $G$ in which no two vertices in the subset are connected by an edge.
\end{definition}

\begin{definition} A \textbf{maximum independent set} (MIS) of a graph $G$ is an independent set containing at least as many vertices as any other independent set.
\end{definition}

\begin{figure}[h!]
  \centering
  \begin{subfigure}[b]{0.45\textwidth}
  \centering
    \includegraphics[width=.8\textwidth]{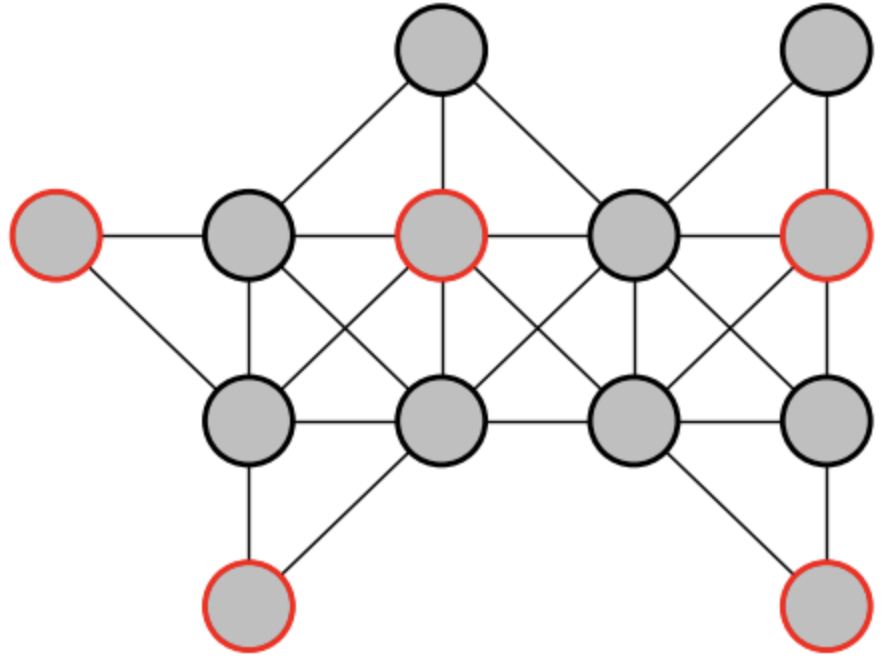}
    \caption{An MIS example}
    \label{fig:MIS}
  \end{subfigure}\hspace{0.03\textwidth}
  \begin{subfigure}[b]{0.45\textwidth}
  \centering
    \includegraphics[width=.8\textwidth]{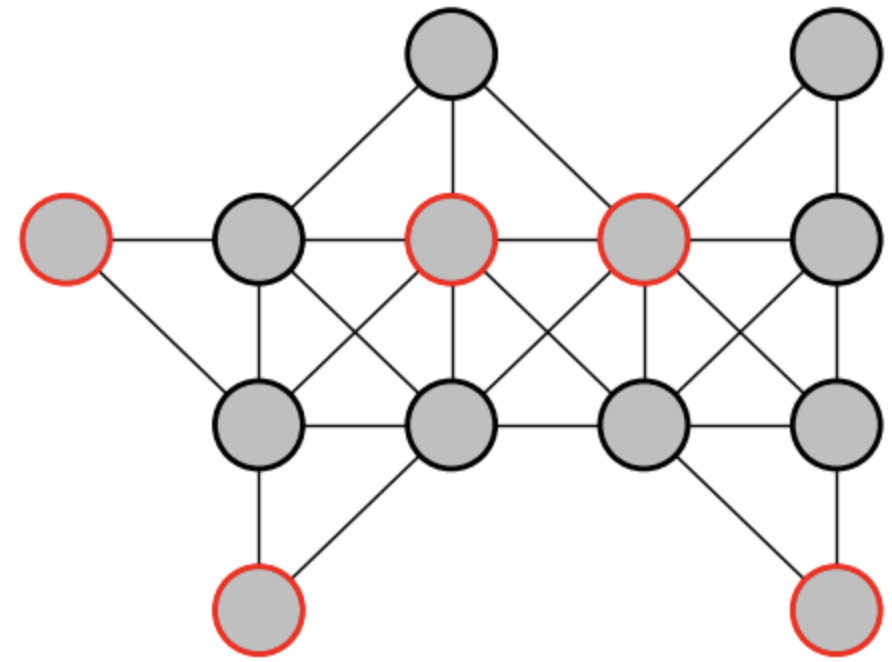}
    \caption{A non MIS example}
    \label{fig:nonMIS}
  \end{subfigure}
    \captionsetup{width=.8\linewidth}
  \caption[width=\textwidth]{Examples of MIS: (a) the set of vertices highlighted in red is the maximum independent set (b) the set of vertices highlighted in red is not an independent set (two vertices share an edge)}
\end{figure}

\noindent We can extend these notions to apply to weighted graphs as well.

\begin{definition}
In a graph $G$ with weighted vertices, the \textbf{weight of a subset} $S$ of the vertices of $G$ is the sum of the weights of the vertices in $S$. A \textbf{maximum weighted independent set}~(MWIS) is an independent set whose weight is at least that of any other independent~set.
\end{definition}

 Given a graph $G$ with $n$ vertices, the computational problem of finding a maximum independent set (or maximum weight independent set) is known as the \ac{MIS} problem (or~\ac{MWIS}, respectively). In general, these problems place no restrictions on the structure of~$G$. We consider what happens when we restrict the inputs to more structured families of graphs, such as the family of unit disk graphs.

\begin{definition}
A \textbf{unit disk graph} (UDG) is a graph in which two vertices share an edge if and only if the vertices lie within a unit distance of each other.
\end{definition}

When the input to a problem is restricted to a UDG, the problem is called UDG-MIS. In \cite{nguyen2023quantum}, the authors introduced three gadgets to encode the \ac{MIS} and \ac{MWIS} problems on Aquila, by converting them to equivalent UDG-MWIS problems. We will now describe their approach by introducing the three gadgets for encoding, along with the crossing lattice. The copy gadget represented in Figure~\ref{fig:gadgets-copy} is a wire carrying a single bit of information. The crossing gadget represented in Figure~\ref{fig:gadgets-xing} allows for two wires to cross without interference. The crossing-with-edge gadget represented in Figure~\ref{fig:gadgets-xwe} allows for two wires to cross, where both cannot have the value $1$. The red borders for the atoms represents an excited atom and the black borders represents an unexcited atom. White corresponds to weight $1\delta,$ grey corresponds to weight $2 \delta$ and black corresponds to weight $4\delta$. For a graph $G = (V,E)$, a crossing lattice is first constructed by using a copy gadget to represent each vertex~$v \in V$ and each line representing a vertex of copy gadgets is drawn with a vertical and a horizontal segment, forming an upper triangular crossing lattice. In this way, each vertical and horizontal line cross exactly once. At each crossing point, the various crossing gadgets will be used to induce interactions between vertices. Using the encoding gadgets introduced in~\cite[Section V]{nguyen2023quantum}, the authors showed that a variety of computational problems can be encoded into UDG-MWIS. The two important steps in encoding these computation problems are as~follows: 

\begin{figure}[h!]
  \centering
\begin{subfigure}[b]{0.25\textwidth}
\includegraphics[width=\textwidth]{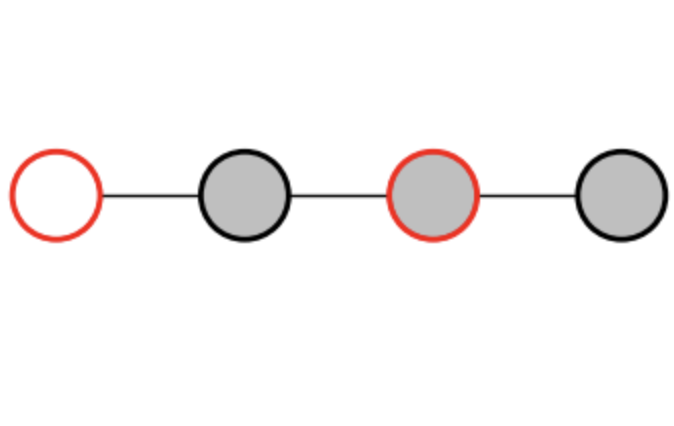}
    \caption{Copy gadget}
    \label{fig:gadgets-copy}
\end{subfigure}\hspace{0.015\textwidth}
\begin{subfigure}[b]{0.25\textwidth}
\includegraphics[width=\textwidth]{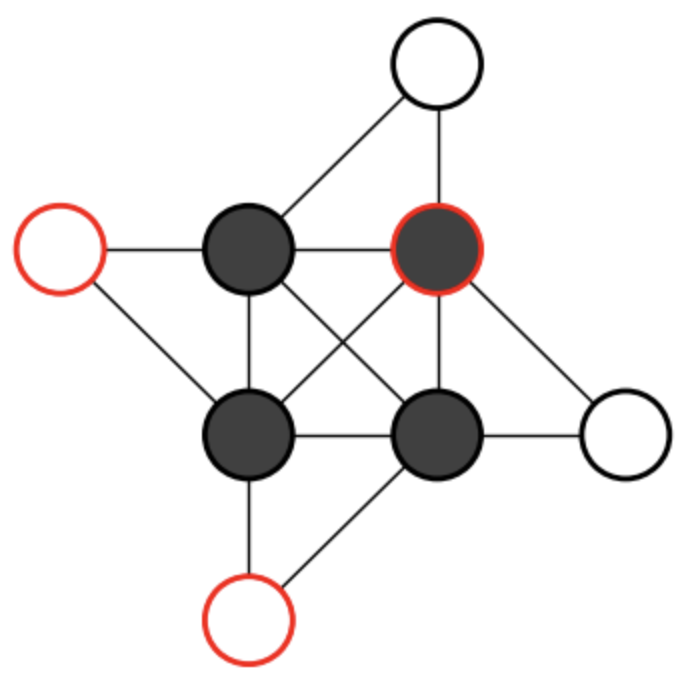}
\caption{Crossing Gadget}
    \label{fig:gadgets-xing}
\end{subfigure}\hspace{0.02\textwidth}
\begin{subfigure}[b]{0.3\textwidth}
\includegraphics[width=.85\textwidth]{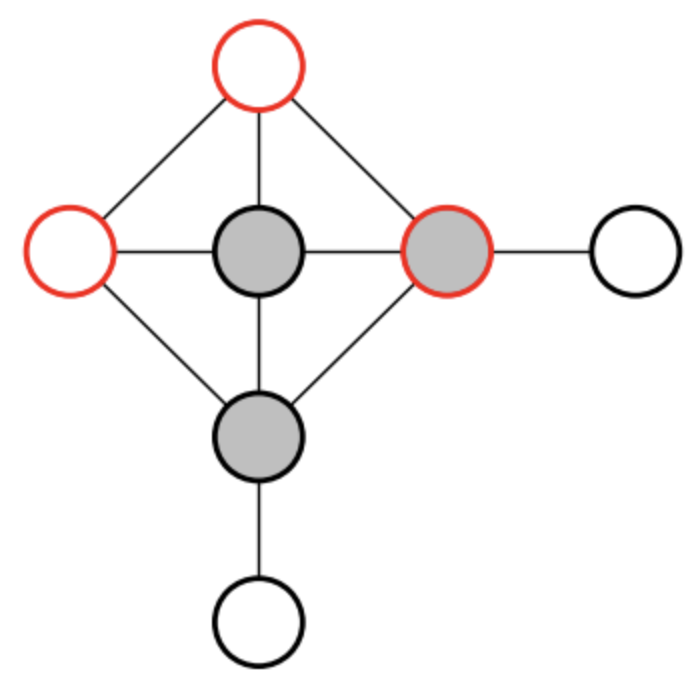}
\caption{Crossing With Edge Gadget}
    \label{fig:gadgets-xwe}
\end{subfigure}
\caption{Gadgets for encoding}
\label{fig:gadgets}
\end{figure}

\begin{enumerate}
    \item The first is to construct a crossing lattice using the copy gadget in Figure \ref{fig:gadgets-copy}.
    \item The second is to apply crossing replacements using the crossing gadget and crossing with edge gadget to encode arbitrary connectivity. 
\end{enumerate}

For a graph $G$ that we want to find the \ac{MWIS}, the crossing gadget is used to encode the crossing between two vertices that do not share an edge. The crossing with edge gadget is used to encode two vertices that share an edge. Using the techniques developed in~\cite[Section~V.B]{nguyen2023quantum}, one can construct the \ac{MWIS} representation of the copy gadget, which consists of a string of $N$ vertices and edges between neighboring vertices. All vertices have a weight~$2 \delta$, except for the two boundary vertices of the line, which have weights $\delta$. By construction, the maximum weight a quantum computer gains from a single defect for the \ac{MWIS} problem is the maximum of flipping a position on a wire $x_i$ with weight $w_i.$ Thus, if $\delta$ is chosen such that $\delta > w_i$, where $w_i$ is the weight of each wire, then an invalid solution does not weigh more than a correct solution. The normalization of vertex weights is needed to ensure that the ground state of the mapped problem used correctly encodes the solution of the original problem. Therefore, the normalization for the \ac{MWIS} problem must obey the constraint \[\delta > \max_i|w_i|.\]

Finding this normalization constraint for other problems is crucial in formulating \ac{MIS} and \ac{MWIS} problems on a quantum computer and will be addressed for QAP when the reduction of QAP to UDG-MWIS is presented in Section~\ref{reduction}.

\section{The ``Carcassonne Computer''}\label{circuit}
The power of Rydberg arrays to solve UDG-MWIS is only useful if we know how to encode problems of interest (\ac{QAP} in particular) as UDG-MWIS problems. We should not hope to find a bespoke encoding for each problem. In this section, we introduce a general framework for building circuits for a broader class of constrained binary optimization problems. This general framework will be used to discuss the \ac{QAP} to UDG-MWIS reduction.

\subsection{Constrained Binary Optimization} \label{sec:tiles-cbop}
The crossing lattice and gadgets introduced in \cite{nguyen2023quantum} provide a systematic approach for reducing various optimization problems to UDG-MWIS. Building on this work, we introduce a visual language as a convenient abstraction both to explain more clearly the reduction for~\ac{QAP} and to facilitate the design of quantum algorithms for other problems. This visual language describes not merely \ac{QAP} but a broader class of \textbf{constrained binary optimization problems}: given a list of $n$ binary variables $x_1, \dots, x_n$, a set $X \subseteq \{0, 1\}^n$ of valid assignments to those variables, and a weight function $w : X \to \R$, find an assignment~$x = (x_1, \dots, x_n) \in X$ that maximizes $w(x)$. We will often find it useful to define $X$ via a list of binary constraints $C_1, \dots, C_m $, where $X = \{x \in \{0, 1\}^n: C_1(x) \wedge \cdots \wedge C_m(x) = 1\}$. We may also (by Lagrange interpolation) represent the weight function as a polynomial of total degree $k \le n$:
$$w(x) = \sum_{b \in \{0, 1\}^n} w_b x_1^{b_1}\cdots x_n^{b_n}.$$

Let $A = (\{x_1, \dots, x_n\}, X, w_A)$ and $B = (\{y_1, \dots, y_m\}, Y, w_B)$ be two constrained binary optimization problems. We say that $A$ \textbf{encodes} $B$ if there exists a  function $f : \{0, 1\}^m \to \{0, 1\}^n$ and constant $w_0$ such that $f(X) = Y$ and $w_B(f(x)) = w_A(x) + w_0$ for all $x \in X$. That is, $f$ relates the variables of $B$ to the variables of $A$ in such a way that solving $A$ amounts to solving $B$: if $x$ is a solution (valid and optimal) to $A$ then $f(x)$ must also be a solution to $B$. In practice, of course, we want $f$ to be easily computable; in the best case it is a simple relabeling of some variables (and possible forgetting of others), of the form~$f(x_1, \dots, x_n) = (x_{k_1}, \dots, x_{k_m})$.

One can formulate an $n \times n$ \ac{QAP} instance quite naturally as a constrained binary optimization problem. The variables $\pi_{xi}$ represent whether facility $x$ is assigned to location~$i$. There are two types of constraints needed: facility constraints $C_{x*}$ to ensure each facility~$x$ is placed in exactly one location, and location constraints $C_{*i}$ ensure each location $i$ is assigned exactly one facility. The weight function can be seen as the negation of the \ac{QAP} cost function, so that the assignment with maximum weight is the one with minimum cost.
\begin{enumerate}
    \item \textbf{Variables:} $\pi_{xi}$ for each $x, i \in [n]$
    \item \textbf{Constraints:} $C_{x*} : \sum_{i=1}^n \pi_{xi} = 1$ and $C_{*i} : \sum_{x=1}^n \pi_{xi} = 1$ for each $x, i \in [n]$
    \item \textbf{Weight Function:}
    $w(\Pi) = -\sum_{x,y,i,j=1}^n f_{xy} d_{ij} \pi_{xi} \pi_{yj}$
\end{enumerate}

While this formulation is accurate and may be the `canonical' formulation of \ac{QAP} as a constrained binary optimization problem, it is not the only one. In other words, there are other constrained binary optimization problems that encode this one. Different formulations, while in some sense equivalent, may introduce more or less overhead when it comes time to construct a UDG-MWIS instance representing the problem. The result is that when a visual language that is flexible enough to represent this broader class of problems is used, we gain not only applicability to problems besides \ac{QAP} but also increased efficiency for the~\ac{QAP}.

\subsection{Circuit Tiles and Weighted Circuits} \label{sec:tiles-tiles}

The basic unit of this language is the \textbf{circuit tile}, a square tile representing a circuit component such as a length of wire or a logic gate, with at most one wire extending to each of its edges. Formally, a circuit tile is described by the set $S \subseteq [4]$ of edges which have a wire and a truth table of possible combinations of wire values $x_i \in \{0, 1\}$. In principle, there are as many different circuit tiles as possible truth tables; in practice, we prefer to use those tiles which have an easily understood meaning. These fall into four basic categories:
\begin{enumerate}
    \item \textbf{Variables} store the value of a named variable; connected wires will carry this value.
    \item \textbf{Wire segments} propagate a value from one edge to another.
    \item \textbf{Wire intersections} allow two wires to cross without interference.
    \item \textbf{Logic gates} set the output wire(s) as a logical combination of the input wire(s).
\end{enumerate}
Figure \ref{fig:tiles} gives an example of each type of tile with its associated truth table (where edges are numbered counterclockwise from the right edge).

\begin{figure}[h!]
    \centering
    \begin{subfigure}[b]{0.2\linewidth}
        \centering
        \begin{tabular}{|c|}
            \hline
            $x_2$ \\
            \hline
            0 \\
            1 \\
            \hline
        \end{tabular}
        \par\medskip
        \includegraphics[width=0.7\linewidth]{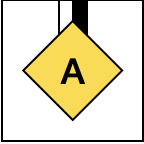}
        \caption{Variable}
    \end{subfigure}
    \begin{subfigure}[b]{0.2\linewidth}
        \centering
        \begin{tabular}{|c|c|}
            \hline
            $x_2$ & $x_4$ \\
            \hline
            0 & 0 \\
            1 & 1 \\
            \hline
        \end{tabular}
        \par\medskip
        \includegraphics[width=0.7\linewidth]{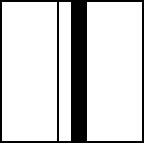}
        \caption{Wire Segment}
    \end{subfigure}
    \begin{subfigure}[b]{0.2\linewidth}
        \centering
        \begin{tabular}{|c|c|c|c|}
            \hline
            $x_1$ & $x_2$ & $x_3$ & $x_4$ \\
            \hline
            0 & 0 & 0 & 0 \\
            0 & 1 & 0 & 1 \\
            1 & 0 & 1 & 0 \\
            1 & 1 & 1 & 1 \\
            \hline
        \end{tabular}
        \par\medskip
        \includegraphics[width=0.7\linewidth]{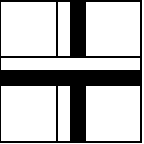}
        \caption{Intersection}
    \end{subfigure}
    \begin{subfigure}[b]{0.2\linewidth}
        \centering
        \begin{tabular}{|c|c|c|}
            \hline
            $x_1$ & $x_2$ & $x_3$ \\
            \hline
            0 & 0 & 0 \\
            0 & 1 & 1 \\
            1 & 0 & 1 \\
            1 & 1 & 1 \\
            \hline
        \end{tabular}
        \par\medskip
        \includegraphics[width=0.7\linewidth]{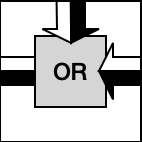}
        \caption{OR Gate}
    \end{subfigure}
    \caption{Truth tables and visual depictions of common circuit tiles}
    \label{fig:tiles}
\end{figure}

Given a tile, we can apply \textbf{decorations} to change its function. Each decoration affects one assignment of wire values (\textit{i.e.} one row of the tile's truth table), indicated by the placement of the decoration on the tile. Each wire is drawn with its top/left side light to indicate a value of 0 and its bottom/right side dark to indicate a value of 1; a decoration placed on one side of the wire affects the assignments in which that wire carries the corresponding value. On a tile with both a vertical and a horizontal wire (for a logic gate, this means the input wires), a decoration on side $i$ of the horizontal wire and side $j$ of the vertical wire is called an $(i,j)$-decoration. There are two types of decorations with different effects, both shown in Figure \ref{fig:decor}.
\begin{enumerate}
    \item \textbf{Restrictions} (indicated by a red ``X'') prohibit a certain assignment of wire values, essentially deleting a row from the tile's truth table.
    \item \textbf{Biases} (indicated by a blue circle) incentivize or dis-incentivize a certain assignment of the wire values by applying a set weight when that assignment is met.
\end{enumerate}

% Decorations
\begin{figure}[ht]
    \centering
    \begin{subfigure}[b]{0.3\linewidth}
        \centering
        \begin{tabular}{|c|c|c|c|}
            \hline
            $x_1$ & $x_2$ & $x_3$ & $x_4$ \\
            \hline
            0 & 0 & 0 & 0 \\
            0 & 1 & 0 & 1 \\
            1 & 0 & 1 & 0 \\
            1 & 1 & 1 & 1 \\
            \hline
        \end{tabular}
        \par\medskip
        \includegraphics[width=0.5\linewidth]{intersection.png}
        \caption{Original Tile}
        \label{fig:decor-basic}
    \end{subfigure}
    \begin{subfigure}[b]{0.3\linewidth}
        \centering
        \begin{tabular}{|c|c|c|c|}
            \hline
            $x_1$ & $x_2$ & $x_3$ & $x_4$ \\
            \hline
            0 & 0 & 0 & 0 \\
            0 & 1 & 0 & 1 \\
            1 & 0 & 1 & 0 \\
            \hline
        \end{tabular}
        \par\medskip
        \includegraphics[width=0.5\linewidth]{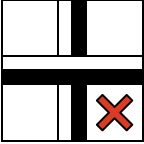}
        \caption{With (1,1)-Restriction}
        \label{fig:decor-restrict}
    \end{subfigure}
    \begin{subfigure}[b]{0.3\linewidth}
        \centering
        \begin{tabular}{|c|c|c|c|}
            \hline
            $x_1$ & $x_2$ & $x_3$ & $x_4$ \\
            \hline
            0 & 0 & 0 & 0 \\
            \rowcolor{softblue} 0 & 1 & 0 & 1 \\
            1 & 0 & 1 & 0 \\
            1 & 1 & 1 & 1 \\
            \hline
        \end{tabular}
        \par\medskip
        \includegraphics[width=0.5\linewidth]{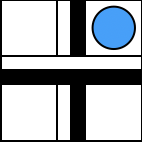}
        \caption{With (0,1)-Bias}
        \label{fig:decor-bias}
    \end{subfigure}
    \caption{Adding decorations to an intersection tile}
    \label{fig:decor}
\end{figure}

Observe that, while the intersection tile in Figure \ref{fig:decor-basic} is the circuit tile analogue of the crossing gadget, the (1,1)-restricted tile in Figure \ref{fig:decor-restrict} is the analogue of the crossing-with-edge gadget. This analogy will become even more apparent when we discuss the process of compiling circuit tiles to unit disk graphs in Section \ref{sec:tiles-compile}. Once we have a library of circuit tiles we can assemble them into a \textbf{weighted circuit}, connecting wires to wires and empty edges to empty edges as follows. This is a reminiscent of the board game \textit{Carcassonne}. 
A \textbf{valid assignment} to the circuit is an assignment of wire values which is consistent both across edges (connected wires must have the same value) and within tiles (all truth tables must be obeyed). The \textbf{weight} of an assignment is the sum of all weights applied on that assignment.

In this way, a weighted circuit $C$ naturally defines a constrained binary optimization problem $A_C$ whose variables are the values of each wire of each tile, with constraints and weight function are described above. We say that $C$ \textit{encodes} a constrained binary optimization problem $B$ if $A_C$ encodes $B$. Example \ref{ex:circuit-ex} shows how a weighted circuit may encode a constrained binary optimization problems.

\begin{example} \label{ex:circuit-ex}
Consider the following constrained binary optimization problem, along with a weighted circuit (see Figure \ref{fig:circuit-ex}) which encodes it:
\begin{enumerate}
    \item \textbf{Variables:} $x_1, x_2, x_3$
    \item \textbf{Constraints:} $C_{ij}: x_i \vee x_j = 1$ for each $1 \le i < j \le 3$
    \item \textbf{Weight Function:} $w(x) = w_1x_1 + w_2x_2 + w_3x_3 + w_4x_1x_2x_3$
\end{enumerate}

\begin{figure}[ht]
    \centering
    \includegraphics[width=0.37\linewidth]{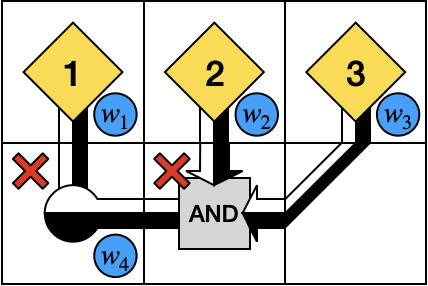}
    \captionsetup{width=.8\linewidth}
    \caption{A simple weighted circuit. The bottom-left tile represents two wires which intersect at their endpoints, allowing for interaction.}
    \label{fig:circuit-ex}
\end{figure}
\end{example}

In the weighted circuit, each variable $x_i$ is encoded by wire $i$. The (0,0)-restriction on the AND tile enforces $C_{23}$, while the (0,0)-restriction on the crossing tile enforces $C_{12} \wedge C_{13}$: the vertical wire carries value $x_1$, the horizontal wire carries value $x_2 \wedge x_3$, and the (0,0)-restriction guarantees that $x_1 \vee (x_2 \wedge x_3) = (x_1 \vee x_2) \wedge (x_1 \vee x_3) = 1$. Therefore, the valid assignments to the circuit are precisely those which satisfy all three constraints $C_{ij}$. Moreover, the biases in the circuit encode the weight function $w$: For $1 \ge i \ge 3$ the weight~$w_i$ is applied when~$x_i = 1$, and the $(1,1)$-bias on the crossing tile applies $w_4$ when we have that~$x_1 \wedge (x_2 \wedge x_3) = x_1x_2x_3 = 1$.

\subsection{Compilation to UDG-MWIS}
\label{sec:tiles-compile}

Circuit tiles exist to provide a shortcut from constrained binary optimization problems to UDG-MWIS. After encoding a constrained binary optimization problem $P$ as a weighted circuit $C$, we want to ``compile'' $C$ to a graph $G$ whose maximum-weight independent sets correspond to optimal assignments for $C$ (and thus for $P$ as well). For a large and complex circuit, it may be quite difficult to construct such a graph by hand; for a small and simple circuit, however, it should be much easier. We will start with the simplest circuits of all---individual tiles---and show how to build graph compilations of more complex circuits using the compilations of their component tiles.

We compile each circuit tile $T$ to a weighted subgraph $G_T$ of the $4 \times 4$ king's graph, as shown in Figure \ref{fig:udg-tiles}. Similarly, we compile a circuit $C$ with an $m \times n$ grid of tiles compiles to a weighted subgraph $G_C$ of the $4m \times 4n$ king's graph, simply by compiling each individual tile and stitching the resulting graphs together. Intuitively, we want the maximum-weight independent sets of each graph $G_T$ to correspond to valid assignments for the corresponding tile $T$, in hopes that if this property holds for each individual tile then it will hold for the entire circuit. It turns out this is not quite the correctness property we need, due to some technicalities which will appear later. In the rest of this section we  will define exactly what makes $G_T$ a ``correct'' compilation of $T$, then show that the correctness of each tile compilation guarantees the correctness of the circuit-to-UDG-MWIS reduction as a whole.

\begin{figure}[ht]
    \centering
    \begin{subfigure}[t]{0.3\linewidth}
        \centering
        \includegraphics[width=0.5\linewidth]{intersection.png}
        \par\medskip
        \includegraphics[width=0.5\linewidth]{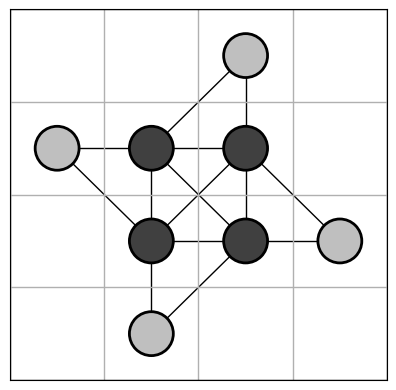}
        \caption{Intersection}
        \label{fig:udg-00and}
    \end{subfigure}
    \begin{subfigure}[t]{0.3\linewidth}
        \centering
        \includegraphics[width=0.5\linewidth]{11.png}
        \par\medskip
        \includegraphics[width=0.5\linewidth]{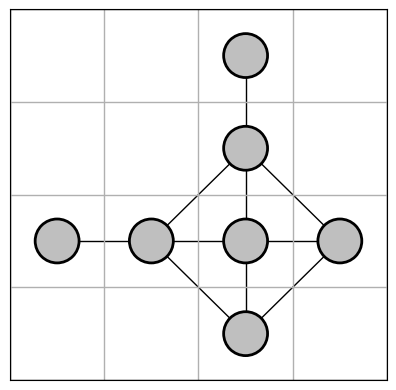}
        \caption{\centering $(1,1)$-Restricted Intersection}
        \label{fig:udg-11intersect}
    \end{subfigure}
    \begin{subfigure}[t]{0.3\linewidth}
        \centering
        \includegraphics[width=0.5\linewidth]{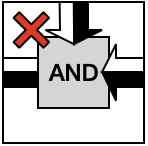}
        \par\medskip
        \includegraphics[width=0.5\linewidth]{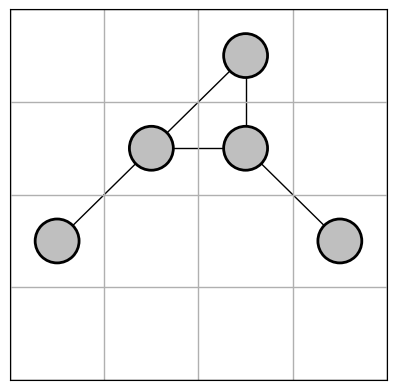}
        \caption{\centering $(0,0)$-Restricted AND Gate}
        \label{fig:udg-intersect}
    \end{subfigure}
    \caption{UDG-MWIS compilations of selected tiles.}
    \label{fig:udg-tiles}
\end{figure}

\hfill\par\noindent\textbf{Connecting and internal vertices.} First, we require the graph $G_T$ of each tile $T$ to have exactly one vertex $v_i$ on each edge $i$ for which $T$ a wire (\textit{e.g.} if $T$ has a wire on edge $1$ then $G_T$ must include exactly one of the vertices in the top row of the $4 \times 4$ king's graph). We call $v_i$ the \textbf{connecting vertex} on edge $i$, and $V_\mathrm{con}(G_T)$ the set of connecting vertices of~$G_T$. In this section, we will assume that connecting vertices are not placed in the corners of the~$4 \times 4$ kings graph (in the full circuit graph this ensures that the connecting vertices of two adjacent tiles are themselves adjacent and avoids interference between diagonally adjacent tiles). A vertex which is not a connecting vertex we call an \textbf{internal vertex}; the set of such vertices we denote $V_\mathrm{int}(G_T)$.

These notions can also be extended to circuits. Consider a circuit $C$ with graph $G_C$. Visually, the connecting vertices of $G_C$ are the vertices that correspond to the ``dangling'' wires of $C$. Formally, the connecting vertices of $G_C$ are all the connecting vertices of its component tiles except those that are adjacent to a connecting vertex of another tile. Let~$V_\mathrm{con}(G_C)$ be the set of its connecting vertices. If $V_\mathrm{con}(G_C) = \emptyset$, we say that the circuit $C$ is \textbf{closed}. For correctness of compilation we assume the circuit to be compiled is closed.

\hfill\par\noindent\textbf{Analysis of independent sets.} Given a tile $T$ with graph $G_T$, an independent set~$S \subseteq V(G_T)$ defines the following assignment $x(S)$ of wire values to $T$:
\begin{itemize}
    \item For edges $1$ (right) and $4$ (bottom), wire $i$ has value $x_i = 1$ if $v_i \in S$ and $x_i = 0$ if $v_i \notin S$.
    \item For edges $2$ (top) and $3$ (left), it is reversed: $x_i = 0$ if $v_i \in S$ and $x_i = 1$ if $v_i \notin S$.
\end{itemize}
Observe that the assignment is reversed for opposite edges. In the completed circuit, each edge between tiles (crossed by a wire) will have a connecting vertex on either side and each maximum-weight independent set will contain exactly one vertex from each pair of connecting vertices. In this setting, the reversal gives consistency of wire values across edges. Given a circuit $C$ and an independent set $S \subseteq V(G_C)$, the assignment $x(S)$ is defined by giving each tile $T$ in $C$ the assignment $x(S \cap V(G_T))$.

Given a circuit $C$ with graph $G_C$, we define the \textbf{circuit weight} $w_\mathrm{circ}$ of an independent set $S \subseteq V(G_C)$ as
$$w_\mathrm{circ}(S, G_C) = \sum_{v \in S} w(v) + \delta \cdot |V_\mathrm{con}(G_C) \setminus S|$$
for some constant $\delta > 0$. Strictly speaking, $G_C = G_C(\delta)$ is a family of graphs parametrized by $\delta$. For simplicity, we will write it as a single graph whose vertex weights are defined in terms of $\delta$. This is similar to the weight of the independent set but it also counts the number of connecting vertices excluded from the independent set. These vertices are important when~$C$ is a subcircuit of a larger circuit, because they allow $S$ to extend to include connecting vertices of adjacent tiles. It follows that, for a closed circuit~$C$, the circuit weight~$w_\mathrm{circ}(S, G_C)$ of an independent set $S \subseteq V(G_C)$ is identical to its weight~$w(S) = \sum_{v \in S} w(v)$.

Let $C$ be a weighted circuit with graph $G_C$, and let $C'$ be a subcircuit of $C$. Recall that we can divide $G_C$ into $4 \times 4$ boxes corresponding to the tiles of $C$. The \textbf{induced subcircuit graph} of $C'$ is the subgraph $G'$ of $G_C$ containing only those vertices in boxes that correspond to tiles of $C'$. Let $C_1$ and $C_2$ be two subcircuits of $C$, with induced subcircuit graphs $G_1$ and $G_2$. We say that $C$ \textbf{splits into} $C_1$ and $C_2$ if every tile in $C$ is contained in either $C_1$ or $C_2$, but not both. In this case we define the \textbf{boundary} of $G_1$ as $$B(G_1) = \{v_1 \in V(G_1) : \exists v_2 \in V(G_2), \{v_1, v_2\} \in E(G_C)\},$$ and likewise the boundary $B(G_2)$ of $G_2$. We also define the \textbf{shared boundary} of $G_1$ and $G_2$ as $B(G_1, G_2) = B(G_1) \cup B(G_2)$. We will use the shorthand notation $B_1$ for $B(G_1)$, $B_2$ for $B(G_2)$, and $B$ for $B(G_1, G_2)$. We have a useful lemma allowing us to understand the circuit weight of $S \subseteq V(G_C)$ in $C$ by studying its restriction to subcircuits of $C$.

\begin{lemma} \label{lem:weight}
    Let $C$ be a weighted circuit with graph $G_C$, which splits into subcircuits $C_1$ and $C_2$ with induced subcircuit graphs $G_1$ and $G_2$. Let $S$ be an independent set of $G_C$, with $S_1 = S \cap V(G_1)$ and $S_2 = S \cap V(G_2)$. We have that
    $$w_\mathrm{circ}(S, G_C) = w_\mathrm{circ}(S_1, G_1) + w_\mathrm{circ}(S_2, G_2) - \delta \cdot |B \setminus S|.$$
\end{lemma}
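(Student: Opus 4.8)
The plan is to reduce everything to a counting identity about connecting vertices, after disposing of the genuine weight terms for free. Since $C$ splits into $C_1$ and $C_2$, every tile of $C$ lies in exactly one of them, so the boxes of $G_1$ and the boxes of $G_2$ partition the boxes of $G_C$; hence $V(G_C) = V(G_1) \sqcup V(G_2)$ and, intersecting with $S$, we have $S = S_1 \sqcup S_2$. Consequently $\sum_{v \in S} w(v) = \sum_{v \in S_1} w(v) + \sum_{v \in S_2} w(v)$, and expanding the definition of $w_\mathrm{circ}$ on both sides of the claimed equation, the genuine weight terms cancel. It therefore suffices to prove the purely combinatorial identity
$$|V_\mathrm{con}(G_C) \setminus S| = |V_\mathrm{con}(G_1) \setminus S_1| + |V_\mathrm{con}(G_2) \setminus S_2| - |B \setminus S|.$$

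The heart of the argument is to understand how the connecting vertices of the two subcircuits relate to those of $C$. First I would classify each connecting vertex of a tile of $C_1$ into three types: those that dangle in all of $C$ (adjacent to no other tile's connecting vertex), those shared with another $C_1$-tile, and those shared with a $C_2$-tile. By the definition of $V_\mathrm{con}$ for a circuit, a type-one vertex is connecting in both $G_1$ and $G_C$; a type-two vertex is internal in both; and a type-three vertex is connecting in $G_1$ (it dangles as far as $C_1$ alone is concerned) but internal in $G_C$ (it is adjacent to a connecting vertex of the $C_2$-tile across the shared edge). The crucial structural claim is that the boundary $B_1 = B(G_1)$ is exactly this set of type-three vertices; equivalently, $B_1 \subseteq V_\mathrm{con}(G_1)$ and $V_\mathrm{con}(G_1) \setminus B_1 = V_\mathrm{con}(G_C) \cap V(G_1)$, together with the symmetric statements for $G_2$.

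To establish this claim I would use the king's-graph box structure together with the standing assumption that connecting vertices never sit in a corner of a $4 \times 4$ box. This guarantees that the only edges of $G_C$ running between a box of $C_1$ and a box of $C_2$ join the connecting vertex on one side of a shared tile-edge to the connecting vertex on the other side; no internal vertex of $G_1$ is adjacent to $G_2$. Hence $B_1$ consists precisely of the $C_1$-side connecting vertices on edges shared with $C_2$, which are exactly the type-three vertices and are in particular connecting vertices of $G_1$. I expect this verification, namely that cross-boundary adjacencies in the compiled graph involve only connecting vertices, to be the main obstacle, since it is the one place where the geometry of the compilation (rather than pure set theory) enters; everything else is bookkeeping.

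With the structural claim in hand, the identity follows by counting. Since $B_1 \subseteq V(G_1)$ and $B_2 \subseteq V(G_2)$ are disjoint and $B = B_1 \cup B_2$, we get $|B \setminus S| = |B_1 \setminus S_1| + |B_2 \setminus S_2|$. From the disjoint decomposition $V_\mathrm{con}(G_1) = (V_\mathrm{con}(G_C) \cap V(G_1)) \sqcup B_1$ we obtain $|V_\mathrm{con}(G_1) \setminus S_1| - |B_1 \setminus S_1| = |(V_\mathrm{con}(G_C) \cap V(G_1)) \setminus S|$, and symmetrically for the index $2$. Adding the two and using $V_\mathrm{con}(G_C) = (V_\mathrm{con}(G_C) \cap V(G_1)) \sqcup (V_\mathrm{con}(G_C) \cap V(G_2))$, a consequence of the partition of $V(G_C)$, yields exactly $|V_\mathrm{con}(G_C) \setminus S|$. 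Rearranging gives the claimed equation and completes the proof.
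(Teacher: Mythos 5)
Your proposal is correct and takes essentially the same route as the paper's proof: both split the weight term using $V(G_C) = V(G_1) \sqcup V(G_2)$ and then reduce the lemma to the set identity $V_\mathrm{con}(G_C) \sqcup B = V_\mathrm{con}(G_1) \sqcup V_\mathrm{con}(G_2)$, which is exactly your counting identity after intersecting with the complement of $S$. The only difference is one of detail: the paper merely asserts this identity with an ``observe that,'' whereas you justify it via the three-type classification of tile connecting vertices and the corner-avoidance geometry of the king's-graph boxes --- an elaboration of the same argument, not a different one.
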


\begin{proof}
    Observe that $V_\mathrm{con}(G_C) \sqcup B = V_\mathrm{con}(G_1) \sqcup V_\mathrm{con}(G_1)$, and thus
    $$(V_\mathrm{con}(G_C) \setminus S) \sqcup (B \setminus S) = (V_\mathrm{con}(G_1) \setminus S_1) \sqcup (V_\mathrm{con}(G_2) \setminus S_2).$$
    It follows that
    \begin{align*}
        w_\mathrm{circ}(S, G_C)
        &= w(S) + \delta \cdot |V_\mathrm{con}(G_C) \setminus S| \\
        &= w(S_1) + w(S_2)
            + \delta \cdot |V_\mathrm{con}(G_1) \setminus S_1|
            + \delta \cdot |V_\mathrm{con}(G_2) \setminus S_2|
            - \delta \cdot |B(G_1, G_2) \setminus S| \\
        &= w_\mathrm{circ}(S_1, G_1) + w_\mathrm{circ}(S_1, G_1)
            - \delta \cdot |B(G_1, G_2) \setminus S|
    \end{align*}
    which concludes the proof.
\end{proof}

We now define correctness of a compilation. 

\begin{definition} Let $C$ be a weighted circuit (which could be a single tile $T$), with valid assignments $X$ and weight function $w : X \to \R$. We call $G_C$ a \textbf{correct compilation} of circuit $C$ if it has the proper connecting vertices and there exists some constants $k \in Z$ and~$\tilde{w} \ge \max_{x \in X} |w(x)|$, such that the following properties are satisfied for all $\delta > 0$:
\begin{enumerate}
    \item For all $x \in X$, there is an independent set $S \subseteq V(G_C)$ with $x(S) = x$ and $w_\mathrm{circ}(S) = k\delta + w(x)$.
    \item For any independent set $S \subseteq V(G_C)$ with $x = x(S)$, if $x \in X$, then $w_\mathrm{circ}(S) \le k\delta + w(x)$ and if $x \notin X$ then $w_\mathrm{circ}(S) \le (k - 1)\delta + \tilde w$.
\end{enumerate}
\end{definition}

This definition of correctness supports a modular approach to circuit design: correct compilations for the subcircuits of a given circuit give us a correct compilation for the entire circuit.

\begin{lemma} \label{lem:correctness}
    Let $C$ be a weighted circuit with graph $G_C$, which splits into subcircuits $C_1$ and $C_2$ with induced subcircuit graphs $G_1$ and $G_2$. If $G_1$ and $G_2$ are both correct compilations of their corresponding circuits, then $G_C$ is a correct compilation of $C$.
\end{lemma}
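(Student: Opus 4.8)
The plan is to construct $G_C$ by gluing $G_1$ and $G_2$ along their shared boundary, and to exhibit explicit constants $k$ and $\tilde w$ for $G_C$ in terms of the constants $(k_1,\tilde w_1)$ and $(k_2,\tilde w_2)$ witnessing correctness of $G_1$ and $G_2$. Since the split partitions the tiles of $C$ and every bias lives on a single tile, the weight function decomposes additively as $w(x) = w_1(x_1) + w_2(x_2)$, where $x_1,x_2$ are the restrictions of $x$ to $C_1,C_2$. I will set $k = k_1 + k_2 - p$, where $p$ is the number of wires crossing the boundary (so that $|B_1| = |B_2| = p$ and $|B| = 2p$), and $\tilde w = \tilde w_1 + \tilde w_2$. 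The bound $\tilde w \ge \max_{x \in X}|w(x)|$ is then immediate from the triangle inequality and the subcircuit bounds, and $k \in \Z$ since $k_1,k_2,p \in \Z$. That $G_C$ has the proper connecting vertices follows from its construction as the graph of the circuit $C$.

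The combinatorial heart of the argument is a bookkeeping identity for the boundary. For an independent set $S$, each boundary pair $(v,v')$ with $v \in B_1$, $v' \in B_2$ adjacent in $G_C$ has at most one endpoint in $S$. Writing $b$ for the number of pairs with neither endpoint in $S$, I would show $|B \setminus S| = p + b$. Using the reversal convention for opposite edges, the assignment $x(S)$ is cross-boundary consistent exactly when $b = 0$; combined with internal validity of each subcircuit, this yields the characterization $x \in X \iff (x_1 \in X_1) \wedge (x_2 \in X_2) \wedge (b = 0)$.

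For Property 1, given $x \in X$ I would take the witness sets $S_1,S_2$ supplied by correctness of $G_1,G_2$ and set $S = S_1 \cup S_2$. Because $x_1$ and $x_2$ agree on each shared wire and realize its value through opposite orientation conventions, exactly one endpoint of every boundary pair lies in $S$, so $S$ is independent in $G_C$ and $x(S) = x$. Applying Lemma~\ref{lem:weight} with $b = 0$ (hence $|B \setminus S| = p$) gives $w_\mathrm{circ}(S, G_C) = (k_1 + k_2 - p)\delta + w_1(x_1) + w_2(x_2) = k\delta + w(x)$, as required.

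For Property 2 I would take an arbitrary independent set $S$, split it as $S_1 = S \cap V(G_1)$ and $S_2 = S \cap V(G_2)$ with induced subassignments $x_i = x(S_i)$, and invoke Lemma~\ref{lem:weight} together with Property 2 for each subgraph, by case analysis. When $x \in X$ both subassignments are valid and $b = 0$, so the two ``valid'' bounds add to give $w_\mathrm{circ}(S,G_C) \le k\delta + w(x)$. When $x \notin X$ the failure comes either from an invalid subassignment or from boundary inconsistency ($b \ge 1$); in the first case one subcircuit contributes its $(k_i-1)\delta + \tilde w_i$ bound, while in the second the extra penalty $-\delta b \le -\delta$ supplies the missing $-\delta$. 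Either way the bound collapses to $(k-1)\delta + \tilde w$. The main obstacle---and the place where the design of $w_\mathrm{circ}$ pays off---is precisely this last case: I must verify that a single inconsistent boundary pair is always penalized by at least $\delta$ in $w_\mathrm{circ}(S,G_C)$, which is exactly what separates an internally-valid-but-globally-invalid $S$ from the genuine solutions by the required $\delta$-gap.
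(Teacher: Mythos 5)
Your proposal is correct and follows essentially the same route as the paper's proof: the same constants $k = k_1 + k_2 - \tfrac{1}{2}|B|$ and $\tilde{w} = \tilde{w}_1 + \tilde{w}_2$, the same use of Lemma~\ref{lem:weight}, the union construction $S = S_1 \cup S_2$ for Property~1, and the same case split for Property~2 (invalid subassignment versus boundary inconsistency, which the paper treats as three cases). Your bookkeeping identity $|B \setminus S| = \tfrac{1}{2}|B| + b$ is a slightly cleaner packaging of the paper's inequalities $|B \setminus S| \ge \tfrac{1}{2}|B|$ and $|B \setminus S| \ge \tfrac{1}{2}|B| + 1$, but it carries the same content.
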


\begin{proof}
    By construction, $G_C = (E, V, w)$ has the proper connecting vertices. We know that~$G_1$ and $G_2$ satisfy the two correctness properties for some constants $\tilde{w}_1$, $\tilde{w}_2$, $k_1$, and $k_2$. Hence, we define $k = k_1 + k_2 - \frac{1}{2}|B|$ and $\tilde{w} = \tilde{w}_1 + \tilde{w}_2$. Note that $k$ is an integer since~$\frac{1}{2}|B| = |B_1| = |B_2| \in \Z$, and that $\tilde{w} \ge \max_{x_1 \in X_1} |w_1(x_1)| + \max_{x_2 \in X_2} |w_2(x_2)| \ge \max_{x \in X} |w(x)|$. Now we will show, for any $\delta > 0$, that $G_C$, $k$, and $\tilde w$ satisfy the two properties of a correct compilation.

    \hfill\par\noindent\textbf{Property 1:} Let $x$ be a valid assignment of wires to $C$. We can write $x = (x_1, x_2)$, where~$x_1 \in X_1$ is a valid assignment of wires to $C_1$ and $x_2 \in X_2$ is a valid assignment of wires to $C_2$, Moreover, $x_1$ and $x_2$ must be consistent with each other: if a wire in $C_1$ connects to a wire in $C_2$, the two wires must be assigned the same value. By the correctness of $G_1$ and $G_2$, there are independent sets $S_1 \subseteq V(G_1)$ and $S_2 \subseteq V(G_2)$ such that $x(S_i) = x_i$ and $w_\mathrm{circ}(S_i) = k_i\delta + w(x_i)$ for each $i \in \{1, 2\}$. Define $S = S_1 \cup S_2$; we will show that $S$ is an independent set of $G_C$, that $x(S) = x$, and that $w_\mathrm{circ}(S, G_C) = k\delta + w(x)$.
    
    First, we show that $S$ is an independent set of $G_C$. Since $S_1$ and $S_2$ are both independent sets, it is sufficient to show that $\{v_1, v_2\} \notin E$, for all $(v_1, v_2) \in S_1 \times S_2$. We let $(v_1, v_2) \in V(G_1) \times V(G_2)$ be such that $\{v_1, v_2\} \in E(G_C)$. It is easy to see that $v_1$ and $v_2$ must be connecting vertices of $G_1$ and $G_2$, respectively. These vertices represent a pair of wires in $C$ which are connected across a tile edge and must therefore be assigned the same value. From the definition of $x(S_1)$ and $x(S_2)$, either $v_1 \in S_1$ and $v_2 \notin S_2$ or $v_1 \notin S_1$ and $v_2 \in S_2$. In either case, $(v_1, v_2) \notin S_1 \times S_2$; this shows that $S$ is indeed an independent set. Next, we show that $x(S) = x$. For every $v \in V(G_1)$, since $S \cap V(G_1) = S_1$ we know that $v \in S$ if and only if $v \in S_2$. Therefore, if there is a wire in $C_1$ that corresponds to~$V$, that wire must be assigned the same value in $x(S)$ as in $x_1 = x(S_1)$. By the same argument, every wire in $C_2$ must be assigned the same value in $x(S)$ as in $x_2 = x(S_2)$. Therefore,~$x(S) = (x_1, x_2) = x$. Finally, we show that $w_\mathrm{circ}(S, G_C) = k\delta + w(x)$. As shown above, for each $\{v_1, v_2\} \in E$, if $(v_1, v_2) \in B_1 \times B_2$ then either $v_1 \in S_1$ and $v_2 \notin S_2$ or $v_1 \notin S_1$ and $v_2 \in S_2$. We have that~$|B \setminus S| = |B \cap S| = \frac{1}{2} |B|$. It follows, by Lemma \ref{lem:weight}, that
    \begin{align*}
        w_\mathrm{circ}(S, G_C)
        &= w_\mathrm{circ}(S_1, G_1) + w_\mathrm{circ}(S_1, G_1)
            - \delta \cdot |B \setminus S| \\
        &= k_1\delta + w_1(x_1) + k_2\delta + w_2(x_2) 
            - \delta \cdot \frac{1}{2}|B| \\
        &= (k_1 + k_2 - \frac{1}{2}|B|)\delta + w_1(x_1) + w_2(x_2) \\
        &= k\delta + w(x).
    \end{align*}

    \hfill\par\noindent\textbf{Property 2:} Let $S$ be an independent set of $G_C$. Define $S_1 = S \cap V(G_1)$ and $S_2 = S \cap V(G_2)$. If $x_1 = x(S_1) \in X_1$ then $w_\mathrm{circ}(S_1, G_1) \le k_1\delta + w_1(x_1) \le k_1\delta + \tilde{w}_1$; on the other hand, if~$x_1 \notin X_1$ then $w_\mathrm{circ}(S_1, G_1) \le (k_1 - 1)\delta + \tilde{w}_1$. Either way, $w_\mathrm{circ}(S_1, G_1) \le k_1\delta + \tilde{w}_1$, and similarly $w_\mathrm{circ}(S_2, G_2) \le k_2\delta + \tilde{w}_2$. Moreover, by assumption each vertex $v_1 \in B_1$ is adjacent to a unique vertex $v_2 \in B_2$, and vice versa. Since $B \cap S$ can include at most one vertex from each pair $(v_1, v_2)$, $|B \cap S| \le \frac{1}{2}|B|$ and thus $|B \setminus S| \ge \frac{1}{2}|B|$. If $x = x(S) \in X$ is a valid assignment for $C$, then $x_1 = x(S_1)$ and $x_2 = x(S_2)$ must both be valid assignments for $C_1$ and $C_2$, respectively. Therefore, by Lemma \ref{lem:weight}:
    \begin{align*}
        w_\mathrm{circ}(S, G_C)
        &= w_\mathrm{circ}(S_1, G_1) + w_\mathrm{circ}(S_1, G_1)
            - \delta \cdot |B \setminus S| \\
        &\le k_1\delta + w_1 + k_2\delta + w_2(x_2)
            - \delta \cdot \frac{1}{2}|B| \\
        &= k\delta + w(x).
    \end{align*}
    If $x(S) \notin X$ is an invalid assignment for $C$, then at least one of the following must be true:
    \begin{enumerate}
        \item $x_1 = x(S_1) \notin X_1$; that is, $x_1$ is an invalid assignment for $C_1$.
        \item $x_2 = x(S_2) \notin X_2$; that is, $x_2$ is an invalid assignment for $C_2$.
        \item $x_1$ and $x_2$ are not consistent with each other.
    \end{enumerate}
    We will consider these three cases individually.
    
    \noindent\textbf{Case 1:} Let $x_1 \notin X_1$. By correctness of $G_1$, $w_\mathrm{circ}(S_1, G_1) \le (k_1 - 1)\delta + \tilde{w}_1$. As shown above, $w_\mathrm{circ}(S_2, G_2) \le k_2\delta + \tilde{w}_2$ and $|B \setminus S| \ge \frac{1}{2}|B|$. Therefore, by Lemma \ref{lem:weight}:
    \begin{align*}
        w_\mathrm{circ}(S, G_C)
        &= w_\mathrm{circ}(S_1, G_1) + w_\mathrm{circ}(S_1, G_1)
            - \delta \cdot |B \setminus S| \\
        &\le (k_1 - 1)\delta + \tilde{w}_1 + k_2\delta + \tilde{w}_2
            - \delta \cdot \frac{1}{2}|B| \\
        &= (k - 1)\delta + \tilde{w}.
    \end{align*}

    \noindent\textbf{Case 2:} Let $x_2 \notin X_2$. By the same argument as in case 1, $w_\mathrm{circ}(S, G_C) \le (k - 1)\delta + \tilde{w}$.

   \noindent\textbf{Case 3:} Suppose that there is some tile edge between $C_1$ and $C_2$ for which the corresponding wire values in $x_1$ and $x_2$ do not agree. Let $v_1 \in B_1$ and $v_2 \in B_2$ be the vertices corresponding to these edges. By the definition of $x(S)$, either $v_1, v_2 \in S$ or $v_1, v_2 \notin S$. Since the two vertices are adjacent, the independent set $S$ cannot contain them both; it must therefore contain neither. As argued above, $B \cap S$ can contain at most one vertex from each other pair of adjacent vertices $v_1' \in B_1, v_2' \in B_2$. Since there are $\frac{1}{2}|B| - 1$ such pairs, $|B \cap S| \le \frac{1}{2}|B| - 1$ and thus $|B \setminus S| \ge \frac{1}{2}|B| + 1$. Therefore, once again by Lemma \ref{lem:weight}:
    \begin{align*}
        w_\mathrm{circ}(S, G_C)
        &= w_\mathrm{circ}(S_1, G_1) + w_\mathrm{circ}(S_1, G_1)
            - \delta \cdot |B \setminus S| \\
        &\le k_1\delta + \tilde{w}_1 + k_2\delta + \tilde{w}_2
            - \delta \cdot \left( \frac{1}{2}|B| + 1 \right) \\
        &= (k - 1)\delta + \tilde{w}.\qedhere
    \end{align*}
\end{proof}

By repeated application of Lemma \ref{lem:correctness}, we can continue to split $C$ until each subcircuit is a single tile. This leads us to Theorem \ref{thm:correctness}, with the outcome that, given a library of correct compilations for individual circuit tiles, we can encode any circuit constructed from those tiles as a UDG-MWIS instance.

\begin{theorem}\label{thm:correctness}
    Let $C$ be a closed circuit with graph $G_C$. For each tile $T_1, \dots T_n$ of $C$, we define $G_i$ as the induced subcircuit graph of $T_i$. If each $G_i$ is a correct compilation for~$T_i$ with parameters $k_i$ and $\tilde{w}_i$, then $G_C$ is a correct compilation of $C$ with parameters~$k~=~\sum_{i=1}^n \left( k_i - \frac{1}{2}|V_\mathrm{con}(G_i)| \right)$ and $\tilde{w} = \sum_{i=1}^n \tilde{w}_i$. Moreover, when $\delta > 2 \tilde{w}$, the maximum-weight independent set(s) of $G_C$ correspond to the maximum-weight valid assignment(s) of $C$ (that is, the map $x(S)$ defines a surjection from the set of maximum-weight independent sets of~$G_C$ to the set of maximum-weight valid assignments of $C$).
\end{theorem}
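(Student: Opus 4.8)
The plan is to prove the theorem in two stages. First I would establish that $G_C$ is a correct compilation of $C$ with the stated parameters $k$ and $\tilde{w}$; then I would deduce the correspondence between maximum-weight independent sets and maximum-weight valid assignments from the two correctness properties together with the hypothesis $\delta > 2\tilde{w}$. For the first stage I would induct on the number of tiles $n$ and repeatedly invoke Lemma \ref{lem:correctness}; for the second stage the key is a separation-of-scales argument showing that invalid assignments are penalized by at least one factor of $\delta$, which $\delta > 2\tilde{w}$ makes decisive.

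For the inductive stage, the cleanest formulation tracks an auxiliary quantity. For a circuit $C'$ (not necessarily closed) that splits into tiles $T_{i_1}, \dots, T_{i_r}$, let $e(C')$ denote the number of internal tile-interfaces of $C'$, i.e.\ the number of tile edges across which two of its tiles connect. Since each such interface consumes exactly one connecting vertex from each of the two tiles it joins, and every tile connecting vertex is either internal or dangles as a connecting vertex of $C'$, we have the bookkeeping identity
$$e(C') = \tfrac{1}{2}\Big(\textstyle\sum_{j} |V_\mathrm{con}(G_{i_j})| - |V_\mathrm{con}(G_{C'})|\Big).$$
I would then prove by induction on $r$ that $G_{C'}$ is a correct compilation of $C'$ with $\tilde{w} = \sum_j \tilde{w}_{i_j}$ and $k = \sum_j k_{i_j} - e(C')$. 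The base case $r = 1$ is immediate, since $e$ of a single tile is $0$. For the inductive step, split $C'$ into nonempty $C_1'$ and $C_2'$ (justified by the remark preceding the theorem, that every multi-tile circuit admits such a split), apply the inductive hypothesis to each piece, and combine via Lemma \ref{lem:correctness}, which yields $\tilde{w} = \tilde{w}^{(1)} + \tilde{w}^{(2)}$ and $k = k^{(1)} + k^{(2)} - \tfrac{1}{2}|B|$. The needed identity $e(C') = e(C_1') + e(C_2') + \tfrac{1}{2}|B|$ holds because the interfaces of $C'$ partition into those internal to $C_1'$, those internal to $C_2'$, and the $\tfrac{1}{2}|B|$ interfaces crossing the split (which constitute the shared boundary). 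Specializing to a closed $C$, where $|V_\mathrm{con}(G_C)| = 0$, the displayed identity gives $e(C) = \tfrac{1}{2}\sum_i |V_\mathrm{con}(G_i)|$ and hence the stated $k = \sum_i\big(k_i - \tfrac12|V_\mathrm{con}(G_i)|\big)$.

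For the second stage I would use that $C$ is closed, so $w_\mathrm{circ}(S, G_C) = w(S)$ for every independent set $S$ and maximizing circuit weight coincides with maximizing weight. Let $w^* = \max_{x \in X} w(x)$, attained at some $x^* \in X$. By Property $1$ of correctness there is an independent set with circuit weight $k\delta + w^*$, so the maximum circuit weight is at least $k\delta + w^*$. Conversely, for any independent set $S$, Property $2$ gives $w_\mathrm{circ}(S) \le k\delta + w(x(S)) \le k\delta + w^*$ when $x(S) \in X$, and $w_\mathrm{circ}(S) \le (k-1)\delta + \tilde{w}$ when $x(S) \notin X$. Since $\tilde{w} \ge |w^*|$ forces $\tilde{w} - w^* \le 2\tilde{w} < \delta$, the invalid case satisfies $(k-1)\delta + \tilde{w} < k\delta + w^*$. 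Therefore every maximum-weight independent set $S$ has $x(S) \in X$ with $w(x(S)) = w^*$ (otherwise its weight would be strictly smaller than the attained maximum), so $x(S)$ is a maximum-weight valid assignment; and conversely Property $1$ supplies, for each maximum-weight valid assignment, a maximum-weight independent set mapping onto it. This establishes the claimed surjection.

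The main obstacle I anticipate is the parameter bookkeeping in the induction rather than either endpoint estimate: one must confirm that the per-split corrections $\tfrac12|B|$ telescope to exactly $\tfrac12\sum_i|V_\mathrm{con}(G_i)|$, which hinges on the facts that each internal interface is cut by precisely one split and that closedness leaves no unmatched connecting vertices. Isolating the interface-count invariant $e(\cdot)$ and proving its additivity under splitting is what makes this transparent and avoids an error-prone direct accounting. A secondary point worth checking is that the one-to-one boundary matching required by Lemma \ref{lem:correctness} is available at every split; this is guaranteed by the grid structure of $G_C$ together with the convention that connecting vertices avoid the corners of each $4 \times 4$ block, so that boundary vertices of $G_1$ and $G_2$ pair off across tile edges.
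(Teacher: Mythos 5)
Your proposal is correct and takes essentially the same route as the paper's proof: both establish correctness of $G_C$ by repeated application of Lemma \ref{lem:correctness} and then derive the surjection from the same separation-of-scales estimate, bounding any independent set with invalid assignment by $(k-1)\delta + \tilde{w} < k\delta + w(x^*)$ using $\delta > 2\tilde{w}$ together with $\tilde{w} \ge |w(x^*)|$. The only difference is the bookkeeping for $k$: the paper argues globally that closedness forces every tile connecting vertex to lie in exactly one shared boundary across the $n-1$ splits, so that $\bigsqcup_{i} V_\mathrm{con}(G_i) = \bigsqcup_{t} B^t$, whereas you thread an interface-count invariant $e(\cdot)$ through a strengthened induction hypothesis valid for non-closed subcircuits and specialize to closedness only at the end --- the same telescoping, organized slightly more carefully (and appropriately so, since the intermediate subcircuits in the paper's tile-by-tile build are themselves not closed).
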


\begin{proof}
    The correctness of $G_C$ and the formula for $\tilde{w}$ follow immediately from Lemma \ref{lem:correctness}, applied~$n - 1$ times as we build $C$ one tile at a time. Now we show the formula for $k$. Denote by $C_1^t$ and~$C_2^t$ the subcircuits appearing at the $t$-th step, with induced subcircuit graphs~$G_1^t$ and $G_2^t$ and shared boundary $B^t = B(G_1^t, G_2^t)$. Since $C$ is closed, every connecting vertex~$v_{ij}$ of every tile $T_i$ must appear in exactly one of $B^1, \dots, B^{n-1}$, which means we have that $\bigsqcup_{i=1}^n V_\mathrm{con}(G_i) = \bigsqcup_{t=1}^m B^t$. Therefore,
    $$k = \sum_{i=1}^n k_i - \frac{1}{2}\sum_{t=1}^m \left| B^t \right| = \sum_{i=1}^n \left( k_i - \frac{1}{2}|V_\mathrm{con}(G_i)| \right).$$
    
    Another consequence of $C$ begin closed is that $V_\mathrm{con}(G_C) = \emptyset$, thus, for any independent set $S \subseteq V(G_C)$, its circuit weight $w_\mathrm{circ}(S, G_C) = w(S) + \delta \cdot |V_\mathrm{con}(G_C) \setminus S|$ is identical to its actual weight $w(S)$. Let $x^*$ be a maximum-weight valid assignment to $C$. By correctness of $G_C$, there exists an independent set $S^* \subseteq V(G_C)$ with $x(S^*) = x^*$ and~$w(S^*) = w_\mathrm{circ}(S^*, G_C) = k\delta + w(x^*)$. Moreover, we claim that $w(S) \le k\delta + w(x^*)$ for any independent set $S \subseteq V(G_C)$, meaning $S^*$ is a maximum-weight independent set of~$G_C$. To show this claim we consider two cases:
    \begin{itemize}
        \item If $x \notin X$ then $w(S) =  w_\mathrm{circ}(S, G_C) \le (k - 1)\delta + \tilde{w} < k\delta - \tilde{w} \le k\delta + w(x^*)$.
        \item If $x \in X$ then $w(S) =  w_\mathrm{circ}(S, G_C) \le k\delta + w(x) \le k\delta + w(x^*)$.
    \end{itemize}
    Therefore, every maximum-weight valid assignment~$x^*$ corresponds to a (possibly non-unique) maximum-weight independent set $S^*$. On the other hand, we claim that every maximum-weight independent set $S^*$ corresponds to a (unique) maximum-weight valid assignment~$x^*$. If $S^*$ is a maximum-weight independent set of $G_C$, with corresponding assignment $x^* = x(S^*) \in X$, then for every $x \in X$ there is an independent set $S \subseteq V(G_C)$ such that $x(S) = x$ and $k\delta + w(x) = w(S) \le w(S^*) \le k\delta + w(x^*)$. So $w(x) \le w(x^*)$, meaning $x^*$ is a maximum-weight valid assignment.
\end{proof}
It is worth noting that the conditions in Theorem \ref{thm:correctness} are sufficient, but not necessary, to ensure the correctness of $G_C$ and the correspondence of its maximum-weight independent sets with the optimal assignments of $C$. In particular, $G_C$ may be a correct compilation for a  smaller $\tilde{w}$ than the one computed in the theorem, allowing for a smaller $\delta$ compared to the weights $w(x)$ and thus for a larger proportional difference between weights of independent sets corresponding to valid wire assignments. Example \ref{ex:correctness-ex} shows how a circuit may have a better compilation than Theorem \ref{thm:correctness} suggests.

\begin{example} \label{ex:correctness-ex}
    Consider the weighted circuit $C$ composed of two unbiased wire tiles $T_1$ and~$T_2$, shown with a compilation $G_C$ in Figure \ref{fig:correctness-ex}. Let $G_1$ and $G_2$ be the induced subcircuit graphs of $T_1$ and $T_2$.

    \begin{figure}[ht]
        \centering
        \begin{subfigure}[t]{0.3\linewidth}
            \centering
            \includegraphics[width=0.9\linewidth]{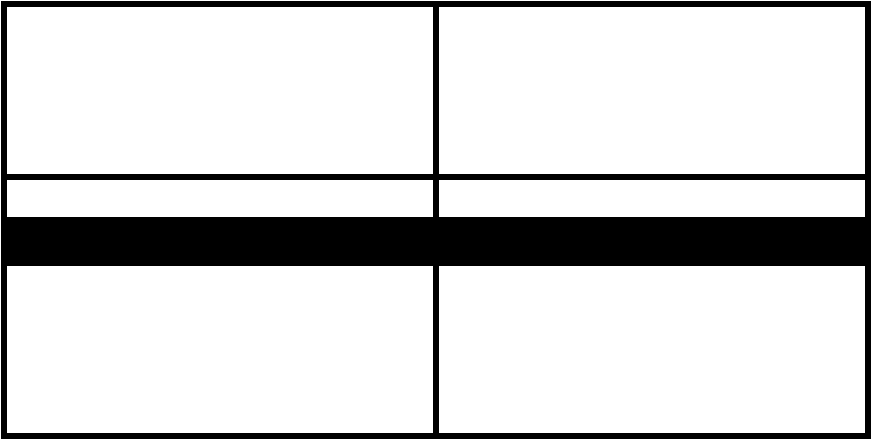}
            \caption{Weighted circuit $C$}
            \label{fig:correctness-ex-circ}
        \end{subfigure}
        \begin{subfigure}[t]{0.3\linewidth}
            \centering
            \includegraphics[width=0.9\linewidth]{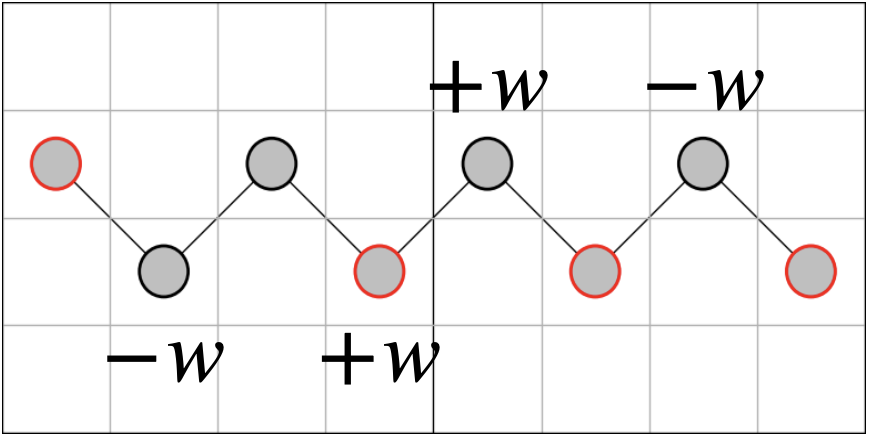}
            \caption{UDG compilation $G_C$}
            \label{fig:correctness-ex-udg}
        \end{subfigure}
        \caption{Correct compilation with $\tilde{w}$ smaller than given by Theorem \ref{thm:correctness}}
        \label{fig:correctness-ex}
    \end{figure}

    Each vertex has a base weight of $2\delta$, and certain vertices have an added weight of $\pm w$, for some $w > 0$. The highlighted vertices represent an independent set $S$, which corresponds to an invalid assignment and has $w_\mathrm{circ}(S, G_C) = 8\delta + w$. This is in fact the maximum circuit weight that can be attained by an independent set corresponding to an invalid assignment, and for each of the two valid assignments there is an independent set $S^*$ with~$w_\mathrm{circ}(S^*, G_C) =~9\delta$. Therefore, $G_C$ is a correct compilation of $C$ for $k = 9$ and $\tilde{w} = w$.

    On the other hand, consider the induced subcircuit graph $G_1$ of $T_1$ and the independent set $S_1 = S \cap V(G_1)$, which still corresponds to an invalid assignment for $T_1$. Notice that~$w_\mathrm{circ}(S_1, G_1) = 4\delta + w$, while each independent set $S_1^*$ corresponding to a valid assignment has circuit weight $w_\mathrm{circ}(S_1^*, G_1) = 5\delta$, so $k_1 = 5$ and $\tilde{w}_1 \ge w$. Likewise, we have~$k_2 = 5$ and $\tilde{w}_2 = w$. Theorem \ref{thm:correctness} only tells us that $G_C$ is a correct compilation of $C$ for~$\tilde{w} = \tilde{w}_1 + \tilde{w}_2 = 2w$, but we know it is still correct even for $\tilde{w} = w$.
\end{example}

\section{QAP to UDG-MWIS Reduction}\label{reduction}
We now show a reduction from \ac{QAP} to UDG-MWIS, enabling us to use Rydberg arrays to find solutions for \ac{QAP}. Specifically, given a \ac{QAP} instance $I = (F, D)$ we construct a weighted graph $G = (V, E, w)$ from which we can ``read off'' the optimal placement $\Pi$ for $I$ from the maximum weight independent set $S$ of $G$. 

\subsection{Initial Steps}
To introduce our reduction, we will first present a slightly more intuitive but less efficient one. This reduction closely follows the techniques from \cite{nguyen2023quantum} with circuit tiles introduced the previous section instead of gadgets. Rather than encoding our \ac{QAP} instance as a weighted graph then converting it via gadgets to an UDG, we reduce first to a circuit then compile, tile by tile, to an UDG:
\begin{enumerate}
    \item First, and most simply, we create a \textbf{crossing lattice} with a wire $(x,i)$ for each of the~$n^2$ binary variables $\pi_{xi}$, as shown in Figure \ref{fig:lattice} for a $2\times 2$ \ac{QAP} instance. We place the wires from left to right in lexicographic order, \textit{i.e.} $(x, i) < (y,j)$ if $x < y$ or $x = y$ and~$i < j$, arranged such that each wire intersects each other wire exactly once. The wires are arranged such that wherever two wires $(x,i) < (y,j)$ intersect, the horizontal wire is $(x,i)$ and the vertical wire $(y,j)$.
     
    \item Next, we consider the \textbf{constraints} on these variables: namely, no facility can be placed in two different locations and no two facilities can be placed in the same location. We encode this by placing a (1,1)-restriction at the intersection of each pair of wires $(x,i)$ and $(y,j)$ for which either $x = y$ or $i = j$, as shown in Figure \ref{fig:constraints}. The reader may notice we have omitted one constraint, that no facility can go unplaced nor any location unfilled. We will encode this constraint in the next step as an incentive for activating as many wires as possible, rather than as a penalty for activating too few.
    
    \item Having established our constraints, we now encode the \textbf{cost function}, as shown in Figure \ref{fig:weights}. At each intersection without a restriction (\textit{i.e.} the intersection of wires $(x,i)$ and $(y,j)$ where $x \ne y$ and $i \ne j$) we add a (1,1)-bias $w_{xi,yj} = 2w_0  - f_{xy} d_{ij} - f_{yx} d_{ji}$, where $w_0 = \max{\{f_{xy} d_{ij}\}_{x,y,i,j=1}^n} + \epsilon$ for some $\epsilon > 0$. On each individual wire $(x,i)$ we also add a 1-bias $w_{xi} = w_0  - f_{xx} d_{ii}$. Thus, the weight of a valid assignment $\Pi = (\pi_{xi})$ (not necessarily a permutation matrix) is
    \begin{align*}
        w(\Pi)
        &= \sum_{x,i=1}^n \pi_{xi} w_{xi} + \sum_{(x,i)<(y,j)} \pi_{xi} \pi_{yj} w_{xi,yj} \\
        &= \sum_{x,i=1}^n \pi_{xi} (w_0 - f_{xx} d_{ii}) + \sum_{(x,i)<(y,j)} \pi_{xi} \pi_{yj} (2w_0 - f_{xy} d_{ij} - f_{yx} d_{ji}) \\
        &= \sum_{x,y,i,j=1}^n \pi_{xi} \pi_{yj} (w_0 - f_{xy} d_{ij}) \\
        &= w_0\sum_{x,y,i,j=1}^n \pi_{xi} \pi_{yj} - C(\Pi).
    \end{align*}
    Since $w_0 > f_{xy} d_{ij}$ for all $(x,y,i,j)$, any assignment $\Pi$ that maximizes $w(\Pi)$ will have that $\pi_{xi} \pi_{yj} = 1$ for as many tuples $(x,y,i,j)$ as possible. From our constraints we know that, if we fix a ``row'' $x$ or ``column'' $i$, at most one wire $(x,i)$ can be activated. Therefore, the assignments optimizing $\sum_{x,y,i,j=1}^n \pi_{xi} \pi_{yj}$ are those in which for every $x$ or $i$ exactly one wire $(x, i)$ is activated; these are precisely the assignments which correspond to permutation matrices. For such an assignment, we have that $\sum_{x,y,i,j=1}^n \pi_{xi} \pi_{yj} = \frac{n(n+1)}{2}$ and thus~$w(\Pi) = \frac{n(n+1)}{2} w_0 - C(\Pi)$, so the optimal assignment for the circuit will be the one that minimizes $C(\Pi)$.
  
    \item Finally, we \textbf{compile} the circuit tile by tile to produce the weighted unit disk graph $G$ in Figure \ref{fig:graph}. For properly normalized weights, the compilation procedure guarantees that the MWIS of $G$ corresponds to the optimal assignment for the circuit. We can read off the optimal assignment $\Pi$ from the four vertices at the top of the graph:~$\pi_{xi} = 1$ if and only if the corresponding vertex is included in the MWIS.
\end{enumerate}

\begin{figure}[ht]
    \centering
    \begin{subfigure}[t]{0.4\linewidth}
        \centering
        \includegraphics[width=0.7\linewidth]{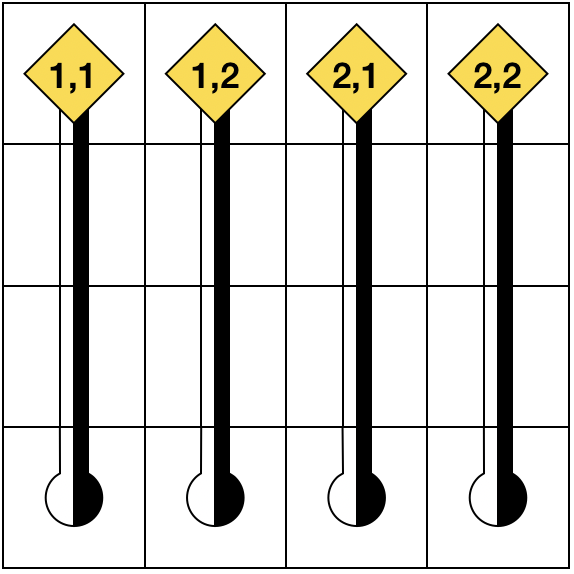}
        \caption{Create wires for variables}
        \label{fig:lattice}
    \end{subfigure}
    \begin{subfigure}[t]{0.4\linewidth}
        \centering
        \includegraphics[width=0.7\linewidth]{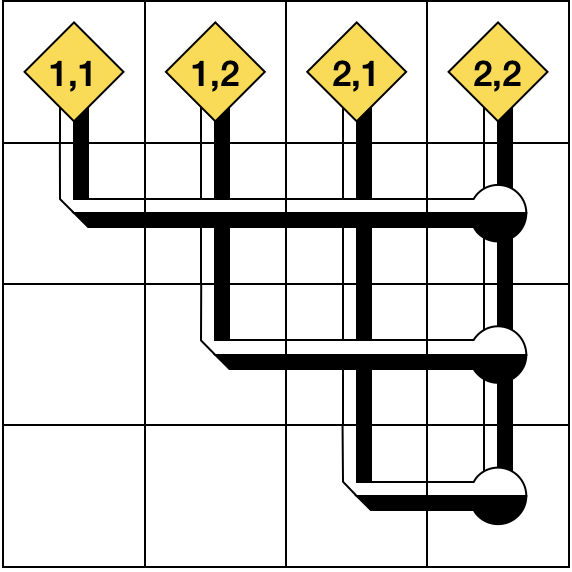}
        \caption{Arrange into crossing lattice}
        \label{fig:constraints}
    \end{subfigure}
    \par\bigskip
    \begin{subfigure}[t]{0.4\linewidth}
        \centering
        \includegraphics[width=0.7\linewidth]{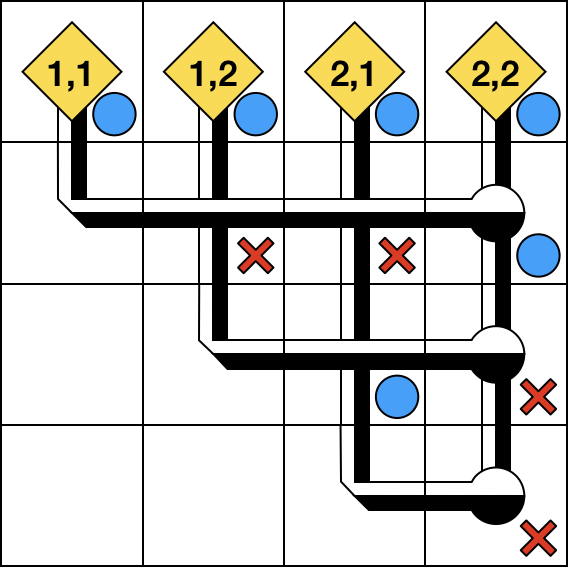}
        \caption{Add biases and constraints}
        \label{fig:weights}
    \end{subfigure}
    \begin{subfigure}[t]{0.4\linewidth}
        \centering
        \includegraphics[width=0.7\linewidth]{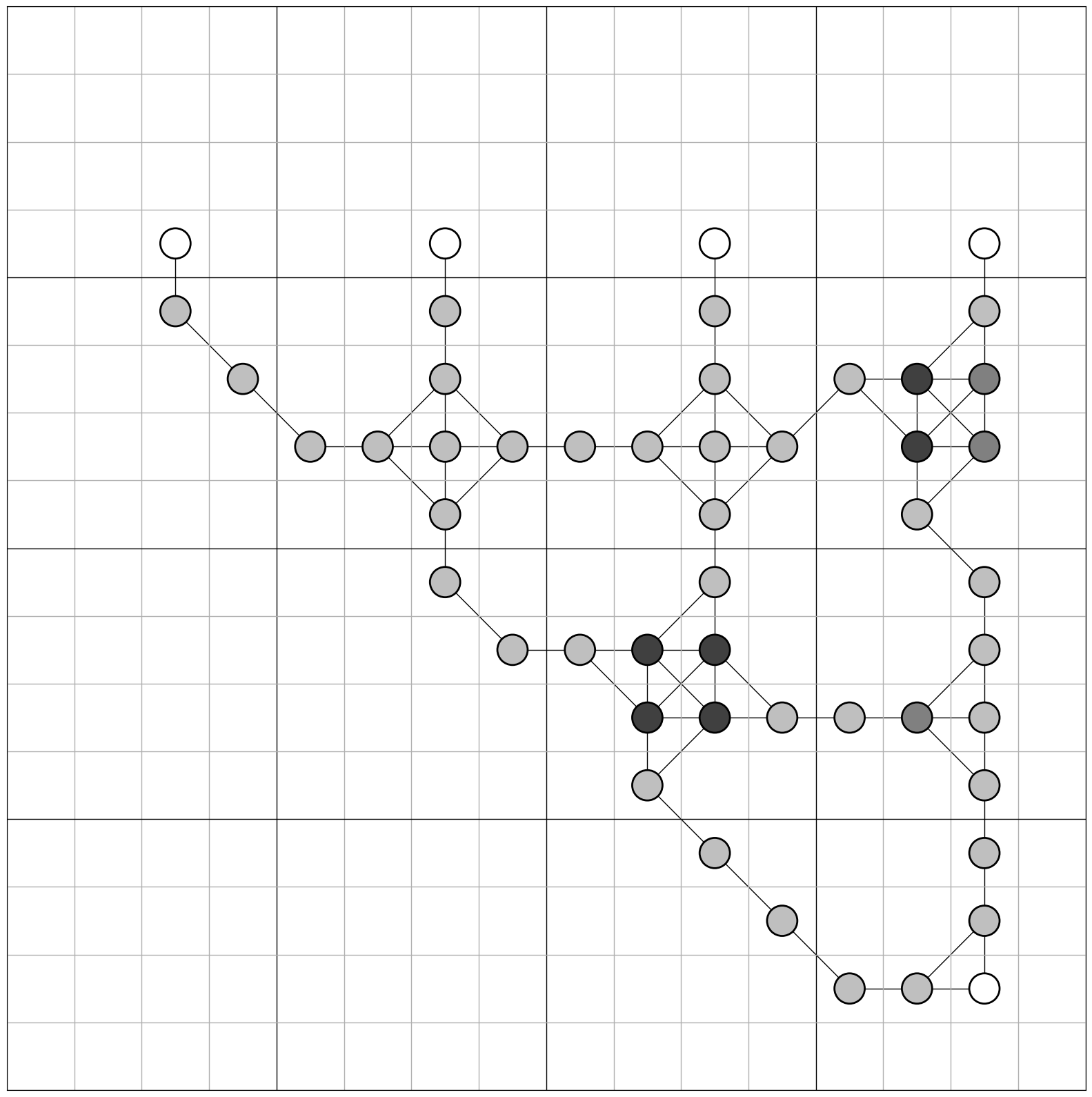}
        \caption{Compile to graph}
        \label{fig:graph}
    \end{subfigure}
    \caption{Transferring the crossing lattice to a graph}
\end{figure}

\subsection{Reduction}
We begin by choosing how to formulation a \ac{QAP} instance as a constrained binary optimization problem. Recall our 'canonical' formulation from Section \ref{sec:tiles-cbop}:
\begin{enumerate}
    \item \textbf{Variables:} $\pi_{xi}$ for each $x, i \in [n]$
    \item \textbf{Constraints:} $C_{x*} : \sum_{i=1}^n \pi_{xi} = 1$ and $C_{*i} : \sum_{x=1}^n \pi_{xi} = 1$ for each $x, i \in [n]$
    \item \textbf{Weight Function:}
    $w(\Pi) = -\sum_{x,y,i,j=1}^n f_{xy} d_{ij} \pi_{xi} \pi_{yj}$
\end{enumerate}
If an assignment $\Pi$ satisfies constraint $C_{x*}$, we can write $\pi_{xn} = 1 - \sum_{i=1}^{n-1} \pi_{xi}$. Thus, we do not need to explicitly include $\pi_{xn}$ as a variable; we can just define it implicitly using~$\pi_{x1}, \dots \pi_{x(n-1)}$. Since the implied $\pi_{xn}$ must still be a binary variable, in the modified problem we replace constraint $C_{x*}$ with $C_{xn}' : \sum_{i=1}^{n-1} \pi_{xi} \le 1$. We can modify each row $s$ of~$\Pi$ in this way, reducing the number of variables from $n \times n$ to $n \times (n - 1)$. We can do likewise for each column $i$: write $\pi_{ni} = 1 - \sum_{x=1}^{n-1} \pi_{xi}$ and replace $C_{*i}$ with $C_{ni}' : \sum_{x=1}^{n-1} \pi_{xi} \le 1$. Some care is required for column $n$; since we have already removed every variable $\pi_{xn}$ we define $\pi_{nn}$ as follows:
$$\pi_{nn} = 1 - \sum_{x=1}^{n-1} \pi_{xn} = 1 - \sum_{x=1}^{n-1} \left(1 - \sum_{i=1}^{n-1} \pi_{xi} \right) = 2 - n + \sum_{x,i=1}^{n-1} \pi_{x,i}.$$

We also replace constraint $C_{*n}$ with $C_{*n}' : \pi_{nn} \in \{0, 1\}$, that is $n - 2 \le \sum_{x,i=1}^{n-1} \pi_{x,i} \le n - 1$ (if constraints $C_{*1}', \dots, C_{*(n-1)}'$ are already satisfied, the right-hand inequality is trivially met). This is in fact identical to $C_{n*}'$, therefore, we can replace both $C_{n*}$ and $C_{*n}$ with a single constraint $C_{nn}' : \sum_{x,i=1}^{n-1} \pi_{xi} \ge n - 2$. As a consequence, we have a new set of variables~$\Pi|_{n-1} = (\pi_{xi})_{x,i=1}^{n-1}$, with the following equations to define the remaining $2n - 1$ variables:
\begin{align}
\label{eqn:vars1}
    \pi_{xn} &= 1 - \sum_{i=1}^{n-1} \pi_{xi}, \forall x \in [n-1] \\
\label{eqn:vars2}
    \pi_{ni} &= 1 - \sum_{x=1}^{n-1} \pi_{xi}, \forall i \in [n-1] \\
\label{eqn:vars3}
    \pi_{nn} &= 2 - n + \sum_{x,i=1}^{n-1} \pi_{xi}.
\end{align}
Closely related, we have a new set of $2n - 1$ constraints enforcing that these implicit variables are indeed either $0$ or $1$. We now turn our attention to the weight function $w(\Pi|_{n-1})$. In order to encode \ac{QAP}, we should have $w(\Pi|_{n-1}) = w_0 - C(\Pi)$ for every valid assignment~$\Pi$. For ease of representation as a weighted circuit, we want to express $w$ as a quadratic polynomial of the form
$$w(\Pi|_{n-1}) = \sum_{\substack{x,y,i,j=1 \\ x<y}}^{n-1} w_{xi,yj} \pi_{xi} \pi_{yj} + \sum_{x,i=1}^{n-1} w_{xi} \pi_{xi}.$$
To compute the coefficients $w_{xi}$ and $w_{xi,yj}$ we first write out the \ac{QAP} cost function as a polynomial in all $n^2$ variables, $C(\Pi) = \sum_{x,y,i,j=1}^n f_{xy} d_{ij \pi_{xi} \pi_{yj}}$. We then use Equations \ref{eqn:vars1}, \ref{eqn:vars2}, and \ref{eqn:vars3} to write the implicit variables $\pi_{xn}$, $\pi_{ni}$, and $\pi_{nn}$ in terms of $\Pi|_{n-1}$. After gathering terms we have the following coefficients, where $f_{xy}' = f_{xy} - f_{xn} - f_{ny} + f_{nn}$ and~$d_{ij}' = d_{ij} - d_{in} - d_{nj} + d_{nn}$:
\begin{align*}
    w_{xi}
    &= -2f_{xx}' d_{ii}' + \sum_{m=1}^n (f_{xm}' d_{im}'
        - (f_{xm} - f_{xn}) (d_{im} - d_{nm}) \\
        &\qquad\qquad\qquad\qquad + f_{mx}' d_{mi}'
        - (f_{mx} - f_{nx}) (d_{mi} - d_{mn})) \\
    & \quad \\
    w_{xi,yj}
    &= -(f_{xy}' d_{yj}' + f_{yx}' d_{ji}')
\end{align*}

Finally, we can fully describe the \ac{QAP} as a constrained binary optimization problem in~$(n-1)^2$ variables:
\begin{enumerate}
    \item \textbf{Variables:} $\pi_{xi}$ for each $x, i \in [n - 1]$
    \item \textbf{Constraints:}
    \begin{align*}
        C_{xn}' &: \sum_{i=1}^{n-1} \pi_{xi} \le 1, \forall x \in [n-1] \\
        C_{ni}' &: \sum_{x=1}^{n-1} \pi_{xi} \le 1, \forall i \in [n-1] \\
        C_{nn}' &: \sum_{x,i=1}^{n-1} \pi_{xi} \ge n-2
    \end{align*}
    \item \textbf{Weight Function:} with $w_{xy,ij}$ and $w_{xi}$ as defined above,
    $$w(\Pi|_{n-1}) = \sum_{\substack{x,y,i,j=1 \\ x<y}}^{n-1} w_{xi,yj} \pi_{xi} \pi_{yj} + \sum_{x,i=1}^{n-1} w_{xi} \pi_{xi}.$$
\end{enumerate}

We now construct a weighted circuit for this formulation of \ac{QAP}. We start by defining two useful subcircuits. The chain $C_\mathsf{OR}(k)$ of (1,1)-restricted OR gates in Figure \ref{fig:or-chain} ensures that no two of $x_1, \dots, x_k$ both have the value $1$ and that $x_0 = x_1 \vee \cdots \vee x_k = x_1 + \cdots + x_k$. This will be useful in enforcing the row constraints $C_{xn}'$. The chain $C_\mathsf{AND}(k)$ of (0,0)-restricted AND gates in Figure \ref{fig:and-chain} ensures that no two of $x_1, \dots, x_k$ both have value $0$; that is, $x_1 + \cdots + x_k \ge k - 1$. This will be useful in combining the sums of rows to enforce the constraint $C_{nn}'$.

\begin{figure}[h!]
    \centering
    \begin{subfigure}[t]{0.7\linewidth}
        \centering
        \includegraphics[width=0.7\linewidth]{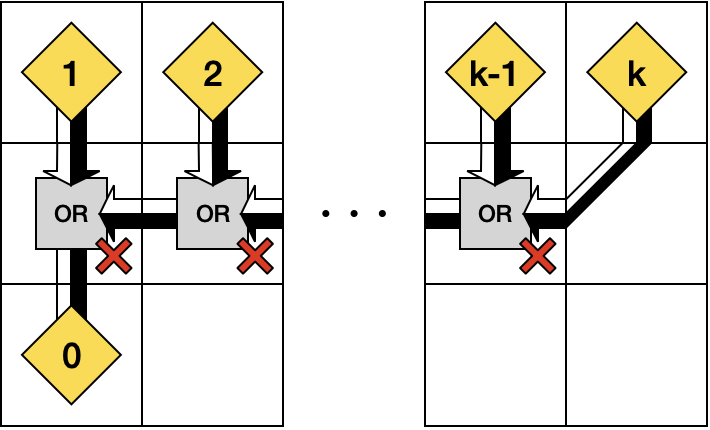}
        \caption{Circuit $C_\mathsf{OR}(k)$ to enforce $x_0 = x_1 + \cdots + x_k$}
        \label{fig:or-chain}
    \end{subfigure}
    \par\bigskip
    \begin{subfigure}[t]{0.7\linewidth}
        \centering
        \includegraphics[width=0.7\linewidth]{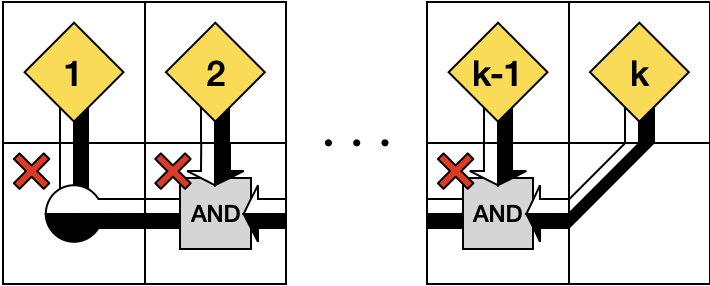}
        \caption{Circuit $C_\mathsf{AND}(k)$ to enforce $1 + x_1 + \cdots + x_k \ge k$}
        \label{fig:and-chain}
    \end{subfigure}
    \caption{Two subcircuits to be used in encoding QAP}
\end{figure}

To build our circuit, we create a wire $(x, i)$ for each variable $\pi_{xi}$ of $\Pi|_{n-1}$. We then arrange these wires in a modified crossing lattice: each pair of wires $(x, i)$ and $(x,j)$ for the same facility $x$ are parallel, while each pair of wires $(x, i)$ and $(y, j)$ with $x \ne y$ intersect exactly once. When $i = j$ we apply a $(1,1)$-restriction at this intersection to help enforce constraint $C_{ni}'$; when $i \ne j$ we apply a $(1,1)$-bias with weight $w_{xi,y_j}$. To each individual wire $(x, i)$ we apply a $1$-bias with weight $w_{xi}$. Finally, we create $n-1$ copies $C_1, \dots, C_{n-1}$ of $C_\mathsf{OR}(n-1)$ and, for each $x \in [n - 1]$, connect the ``input'' wires $1, \dots, n-1$ of $C_x$ to the wires $(x, i), \dots, (x, n-1)$ and its ``output'' wire $0$ to wire $x$ of $C_\mathsf{AND}(n-1)$. The resulting circuit for a $4 \times 4$ \ac{QAP} instance is shown in Figure \ref{fig:4x4-tiles}, with its compilation to UDG-MWIS in Figure \ref{fig:4x4-graph}.

\begin{theorem} \label{thm:qap-reduction}
Given a \ac{QAP} instance $I = (F, D)$, with circuit $C_I$ and graph $G_I$ constructed as above, if
$\delta > \max_{x,i}\{|w_{xi}| + \sum_{y,j}|w_{xi,yj}|\}$ then the maximum-weight independent set(s) of $G_I$ corresponds to the optimal solution(s) of $I$.
\end{theorem}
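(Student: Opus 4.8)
The plan is to derive the statement from Theorem \ref{thm:correctness} by checking its hypotheses for the particular circuit $C_I$, and then to identify the maximum-weight valid assignments of $C_I$ with the optimal solutions of $I$. First I would verify that $C_I$ is closed: every variable wire $(x,i)$, as well as the input and output wires of the subcircuits $C_\mathsf{OR}(n-1)$ (Figure \ref{fig:or-chain}) and $C_\mathsf{AND}(n-1)$ (Figure \ref{fig:and-chain}), must terminate so that no connecting vertex of $G_I$ is left dangling. I would then confirm that each tile type occurring in $C_I$ --- the $1$-biased variable and wire tiles, the plain intersection, the $(1,1)$-restricted intersection, and the decorated OR and AND gates of the two chains --- admits a correct compilation from the tile library (for instance those displayed in Figure \ref{fig:udg-tiles}). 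Granting these two facts, Theorem \ref{thm:correctness} immediately gives that $G_I$ is a correct compilation of $C_I$ and that, for $\delta$ above a suitable threshold, the map $x(S)$ surjects the maximum-weight independent sets of $G_I$ onto the maximum-weight valid assignments of $C_I$.

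Second, I would establish the problem-level correspondence between the circuit and the QAP. I would argue that the valid assignments of $C_I$ are exactly the reduced permutation matrices $\Pi|_{n-1}$: the chain $C_\mathsf{OR}(n-1)$ attached to facility $x$ enforces the row constraint $C_{xn}'$ while exposing the row sum on its output wire, the $(1,1)$-restrictions at the intersections with $i=j$ enforce the column constraints $C_{ni}'$, and feeding the $n-1$ row-sum outputs into $C_\mathsf{AND}(n-1)$ enforces $\sum_{x,i=1}^{n-1}\pi_{xi} \ge n-2$, which is exactly $C_{nn}'$. Combined with the implicit definitions \eqref{eqn:vars1}, \eqref{eqn:vars2}, \eqref{eqn:vars3} of $\pi_{xn}$, $\pi_{ni}$, and $\pi_{nn}$, these constraints are equivalent to $\Pi$ being a permutation matrix. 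Since the coefficients $w_{xi}$ and $w_{xi,yj}$ were constructed precisely so that $w(\Pi|_{n-1}) = w_0 - C(\Pi)$ on every valid assignment, maximizing $w$ coincides with minimizing the QAP cost $C(\Pi)$. This yields a weight-preserving bijection between maximum-weight valid assignments of $C_I$ and optimal solutions of $I$, which composes with the surjection from the first step.

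The main obstacle is matching the sharp threshold $\delta > \max_{x,i}\{|w_{xi}| + \sum_{y,j}|w_{xi,yj}|\}$ to correctness, because Theorem \ref{thm:correctness} only guarantees the correspondence for the much cruder bound $\delta > 2\tilde{w}$ with $\tilde{w} = \sum_i \tilde{w}_i$. As the remark after that theorem anticipates, $G_I$ is in fact correct for a far smaller effective $\tilde{w}$, and I would recover the stated bound through a \emph{defect} analysis in the spirit of the normalization $\delta > \max_i|w_i|$ from Section \ref{background}. The idea is that any independent set corresponding to an invalid assignment must, relative to the optimal valid configuration, leave at least one connecting or boundary vertex uncovered and so forfeit at least $\delta$; meanwhile the total weight that a single variable $\pi_{xi}$ can contribute is its own bias $w_{xi}$ together with every interaction bias $w_{xi,yj}$ it participates in, whose combined magnitude is at most $|w_{xi}| + \sum_{y,j}|w_{xi,yj}|$. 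Choosing $\delta$ above this per-variable maximum ensures that no single deviation from a valid assignment can be paid for by the associated weight gain, so every invalid independent set is strictly dominated by the optimal valid one. The delicate part is making this accounting precise --- verifying that each forfeited $\delta$ can be charged against exactly the variables whose biases are affected, so that the per-variable estimate telescopes to the global separation required.
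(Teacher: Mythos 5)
Your first two paragraphs track the paper's own proof quite closely: the paper likewise verifies that $C_I$ encodes the reduced problem $P_I$ (the OR chains enforce $C_{xn}'$, the $(1,1)$-restrictions enforce $C_{ni}'$, the AND chain enforces $C_{nn}'$, and the biases reproduce $w$), establishes that $G_I$ is a correct compilation tile by tile (noting minor irregularities such as connecting vertices in corners), and invokes Theorem \ref{thm:correctness}. The genuine gap is in your third paragraph, and it is precisely the part you flag as ``delicate'' and leave undone. Your charging scheme --- each forfeited $\delta$ is charged against the variables whose biases are affected --- does not work as stated, because errors and bias gains live in different parts of the circuit. All biases ($w_{xi}$ and $w_{xi,yj}$) sit in the crossing-lattice section, while the row constraints $C_{xn}'$ and the constraint $C_{nn}'$ are enforced in the logic section (the chains $C_\mathsf{OR}(n-1)$ and $C_\mathsf{AND}(n-1)$), which carries no biases at all. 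An invalid assignment may place all of its $t$ errors in the logic section: for instance $\pi_{x1}=\pi_{x2}=1$ forces a violated truth table or a mismatched edge inside the subcircuit $C_x$, not on the wires $(x,1)$ or $(x,2)$ themselves. Then the forfeited $\delta$'s are not adjacent to any bias, yet the crossing lattice reaps the full interaction weight of two activated wires, so the per-variable estimate does not telescope without a further argument connecting the two regions.

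The paper closes exactly this hole with a structural observation about its gates: (i) for every tile, an independent set with pre-bias circuit weight $(k_i - t)\delta$ corresponds to a valid tile assignment with at most $t$ edge values flipped; and (ii) for the $(1,1)$-restricted OR gates and $(0,0)$-restricted AND gates, any assignment obtained by flipping the \emph{output} wire of a valid assignment can alternatively be produced by flipping an \emph{input} wire of a (possibly different) valid assignment. Iterating (ii) pushes every error out of the logic section and into the crossing lattice, so an independent set $S'$ with $t$ errors corresponds to some valid assignment $\pi$ with at most $t$ crossing-lattice wires flipped; each flip of wire $(x,i)$ changes the weight by at most $|w_{xi}| + \sum_{y,j}|w_{xi,yj}| < \delta$, whence $w(S') \le (k-t)\delta + w(\pi) + t\cdot\max_{x,i}\{|w_{xi}| + \sum_{y,j}|w_{xi,yj}|\} < k\delta + w(\pi^*) = w(S^*)$. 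This error-transfer step (ii) is the missing idea your accounting needs; without it, the sharp threshold in the statement cannot be justified and you are left only with the crude bound $\delta > 2\tilde{w}$ from Theorem \ref{thm:correctness}.
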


\begin{proof}
We have shown how to encode $I$ as the following constrained binary optimization problem $P_I$.
\begin{enumerate}
    \item \textbf{Variables:} $\pi_{xi}$ for each $x, i \in [n - 1]$
    \item \textbf{Constraints:}
    \begin{align*}
        C_{xn}' &: \sum_{i=1}^{n-1} \pi_{xi} \le 1, \forall x \in [n-1] \\
        C_{ni}' &: \sum_{x=1}^{n-1} \pi_{xi} \le 1, \forall i \in [n-1] \\
        C_{nn}' &: \sum_{x,i=1}^{n-1} \pi_{xi} \ge n-2
    \end{align*}
    \item \textbf{Weight Function:}
    $$w(\Pi|_{n-1}) = \sum_{\substack{x,y,i,j=1 \\ x<y}}^{n-1} w_{xi,yj} \pi_{xi} \pi_{yj} + \sum_{x,i=1}^{n-1} w_{xi} \pi_{xi}$$
\end{enumerate}
To see that the weighted circuit $C_I$ encodes $P_I$, observe the following:
\begin{itemize}
    \item For a given location $i$, the constraint $C_{ni}'$ is satisfied if and only if $\pi_{xi} \pi_{yi} = 0$ for each pair of distinct facilities $x$ and $y$. This is precisely what the $(1,1)$-restrictions at each intersection of the wires $(x,i)$ and $(y, i)$ enforce.
    \item For a given facility $x$, the constraint $C_{xn}'$ that $\pi_{x,1} + \dots + \pi_{x,n-1} \le 1$ is enforced by the subcircuit $C_x$, since the value of its output wire must be either $0$ or $1$.
    \item Since the output wire of each subcircuit $C_x$ is connected to the wire $x$ of $C_\mathsf{AND}(n-1)$, the subcircuit $C_\mathsf{AND}(n-1)$ enforces $C_{nn}'$:
    $$1 + \sum_{x=1}^{n-1} \left( \sum_{i=1}^{n-1} \pi_{xi} \right) \ge n - 1.$$
    \item Each intersection weight $w_{xi,yj}$ is applied precisely when $\pi_{xy}\pi_{yj} = 1$, while each wire weight $\pi_{xi}$ is applied precisely when $\pi_{xi} = 1$. Therefore the weight in $C$ of a valid assignment $\pi$ is identical to its weight in $P$:
    $$w(\pi) = \sum_{\substack{x,y,i,j=1 \\ x<y}}^{n-1} w_{xi,yj} \pi_{xi} \pi_{yj} + \sum_{x,i=1}^{n-1} w_{xi} \pi_{xi}.$$
\end{itemize}
Next, we need to show that $G_I$ is a correct compilation of $C_I$. Since the graph compilation $G_T$ of each circuit tile $T$ has no more than 16 vertices (in fact, none of our tiles have more than~8), one can establish its correctness simply by computing the weight of each independent set. While our graph has a few irregularities---the graphs for some of the diagonal wires have a connecting vertex in a corner, and the $(1,1)$-restricted OR gates on the bottom row have an apparent connecting vertex on their bottom edge---it is easy to see that these do not affect the correctness of the entire graph. We can therefore apply Theorem \ref{thm:correctness} to conclude that~$G_I$ is a correct compilation of $C_I$.

Finally, observe that we state a condition $\delta > \max_{x,i}\{|w_{xi}| + \sum_{y,j}|w_{xi,yj}|\}$ less restrictive the condition $\delta > \tilde{w}$ in Theorem \ref{thm:correctness}. We are able to improve on that very conservative bound by considering the structure of our circuit $C$. In particular, for each tile $T_i$ in our circuit (before applying biases), each independent set of $G_i$ with weight $(k_i - t)\delta$ corresponds to a valid assignment for that tile with the wire value on at most $t$ edges flipped. Moreover, each assignment to one of our restricted logic gates which results from flipping the output wire of a valid assignment can alternatively by produced by flipping one of the input wires of a (possibly different) valid assignment to that tile. By extension, any wire value flipped in the logic section of $C$ can be modeled by flipping a wire value in the crossing lattice section. Therefore, each independent set of $G_I$ with (pre-bias) weight $(k - t)\delta$ (``with $t$ errors'') corresponds to a valid assignment with its wire values flipped at no more than $t$ edges in the crossing lattice. Let $\pi$ be a (possibly invalid) assignment to the wires of $C$, and~$\pi'$ an assignment produced by flipping wire $(x, i)$ at one edge in the crossing lattice; then~$w(\pi') \le w(\pi) + |w_{xi}| + \sum_{y,j}|w_{xi,yj}|$. Let $\pi^*$ be a maximum-weight valid assignment to~$C$ with corresponding independent set $S^*$, and $S'$ an independent set of $G_I$ corresponding to an invalid assignment $\pi^*$ with $t$ errors. Then, $w(\pi') \le w(\pi) + t \cdot \max_{x,i}\{|w_{xi}| + \sum_{y,j}|w_{xi,yj}|\}$ for some valid assignment $\pi$, meaning
\begin{align*}
    w(S')
    &\le (k - t)\delta + w(\pi)    
        + t \cdot \max_{x,i}\{|w_{xi}| + \sum_{y,j}|w_{xi,yj}|\} \\
    &< (k - t)\delta + w(\pi) + t \delta \\
    &\le k\delta + w(\pi^*) \\
    &= w(S^*).
\end{align*}
Therefore, the maximum-weight independent set of $G_I$ corresponds to a maximum-weight valid assignment for $C_I$, and thus to an optimal solution for $I$.
\end{proof}

\begin{figure}[h!]
    \centering
    \begin{subfigure}[t]{0.5\linewidth}
        \centering
        \includegraphics[width=\linewidth]{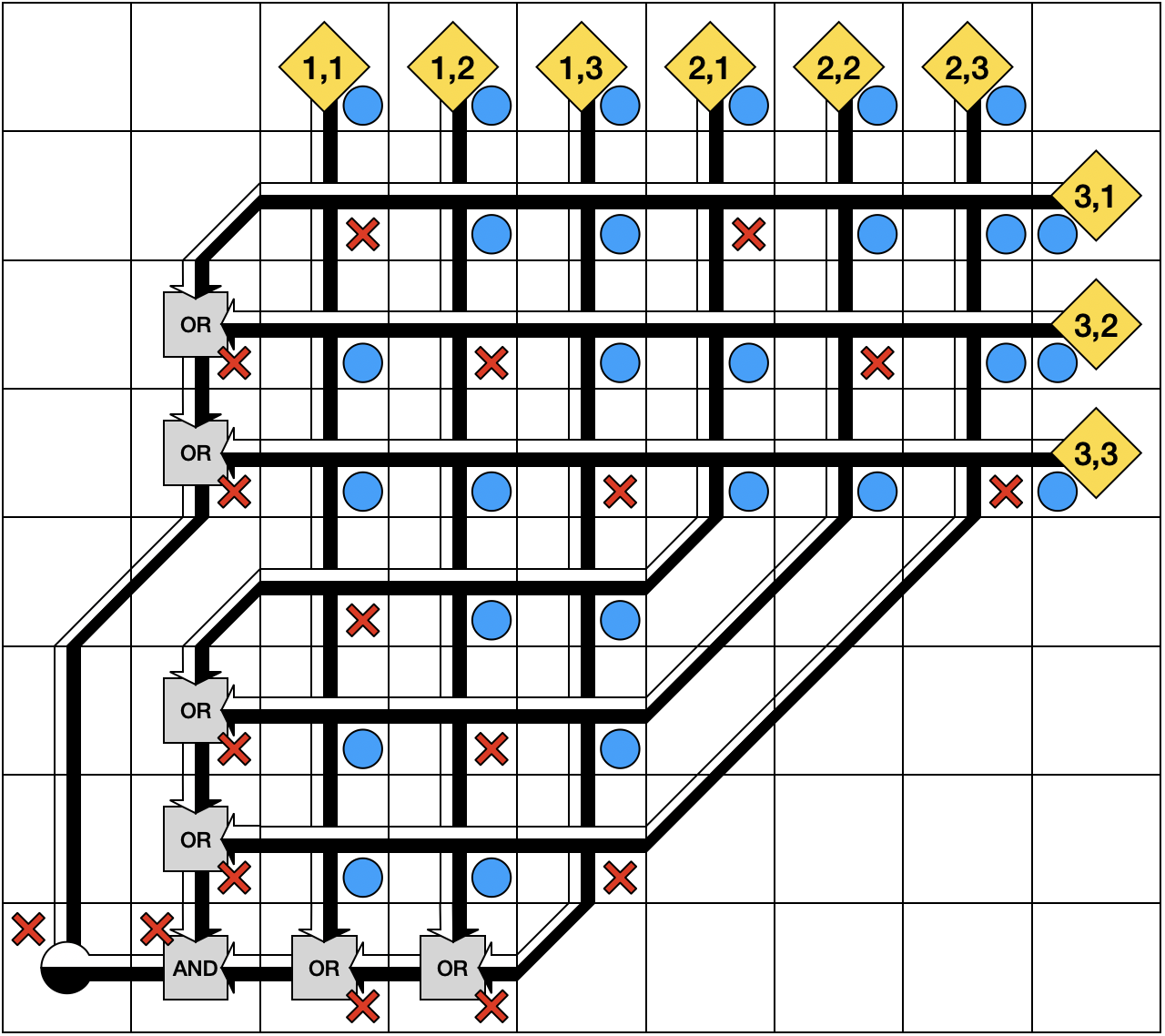}
    
        \caption{Weighted circuit encoding $4 \times 4$ QAP}
        \label{fig:4x4-tiles}
    \end{subfigure}
    \par\bigskip
    \begin{subfigure}[t]{0.5\linewidth}
        \centering
        \includegraphics[width=\linewidth]{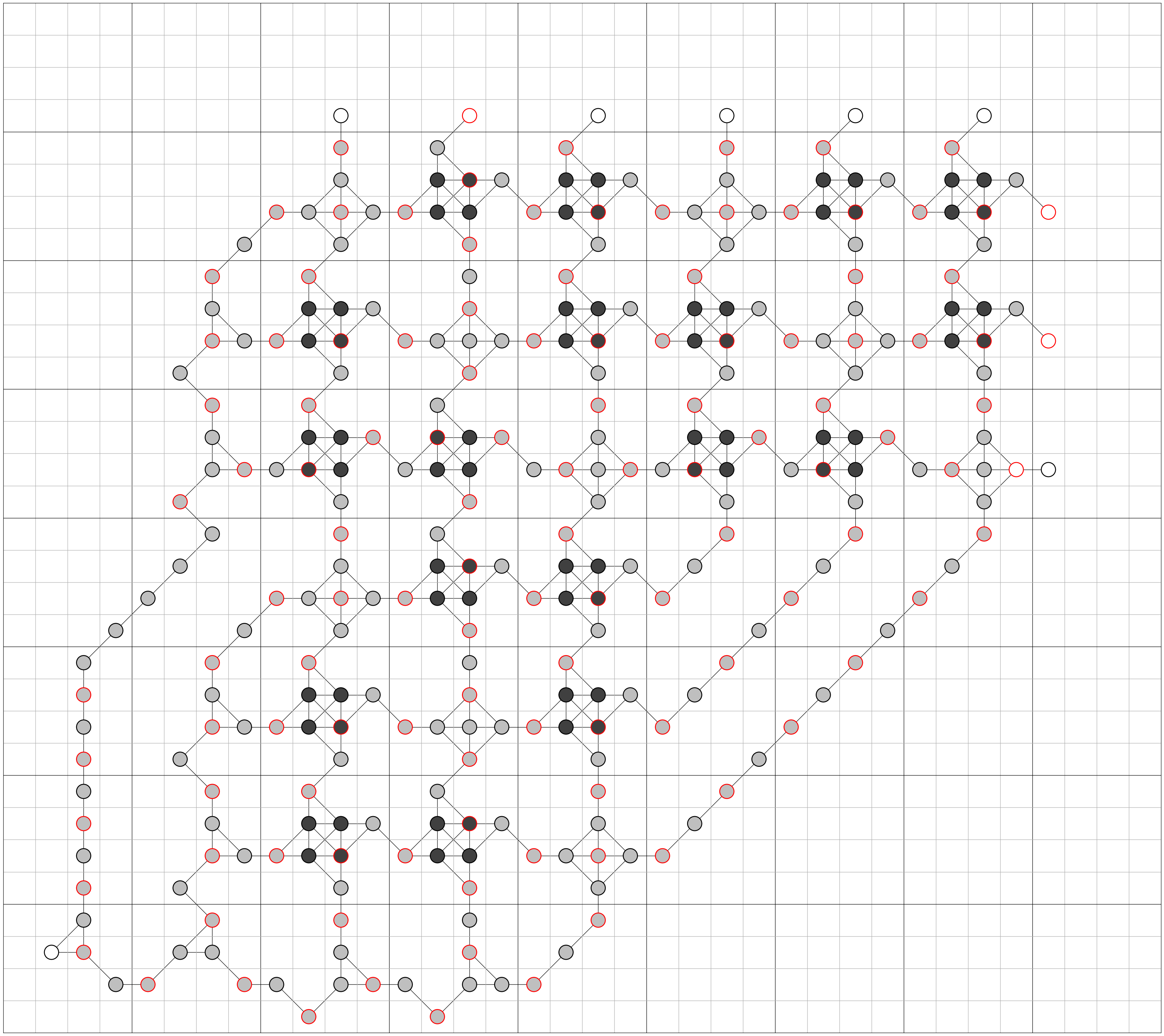}
        
        \caption{UDG-MWIS instance encoding $4 \times 4$ QAP}
        \label{fig:4x4-graph}
    \end{subfigure}
    \captionsetup{width=0.85\linewidth}
    \caption{Encoding of the $4 \times 4$ QAP, shown with a maximum-weight independent set corresponding to the assignment $(\pi_1, \pi_2, \pi_3, \pi_4) = (2, 4, 3, 1)$}
\end{figure}

\section{Conclusion and Future Work}

In this paper, we provided an algorithm to find valid and optimal solutions to the \ac{QAP}. We extended the techniques used for solving the \ac{MWIS} problem to \ac{QAP} by providing a reduction from \ac{QAP} to UDG-MWIS. This can be exploited to determine the placement relationships of Rydberg atoms that, when excited, will find solutions for \ac{QAP}. An optimized circuit algorithm was developed for \ac{QAP}, significantly reducing the number of atoms required for problem representation and improving overall efficiency. The visual language provided describes a broader class of constrained binary optimization problems and applying this language to \ac{QAP} provides valuable insight into encoding other complex problems using Rydberg arrays. Additionally, it facilitates the design of quantum algorithms for a broader range of computational challenges, enabling the exploration of novel applications in quantum computing. The Aquila machine from QuEra is a natural UDG-MIS solver, and the formulation provided in this paper can be used to solve QAP on it.

It remains open to test the algorithm on small \ac{QAP} problems on quantum hardware and investigate the probability of valid solutions and optimal solutions. The ability to local detune the excitation field as required to run the algorithm was released recently and requires expert knowledge in quantum to be integrated properly. One could analyze the scaling in terms of the number of atoms needed to encode an $n\times n$ \ac{QAP} problem, execution time and success probability. One could also use the visual language and reduction provided in this paper and apply it to other problems that can be formulated using Rydberg arrays.

\section*{Acknowledgements}
The authors would like to thank Giuseppe Cotardo, Jason LeGrow and Gretchen Matthews for comments on previous drafts of this paper, as well as Charlotte Lowdermilk and Ehren Hill for their continued support during this project.

\bibliographystyle{plain}
\bibliography{QuantumQAP}

\end{document}